\documentclass[12pt]{article}

\usepackage{amsfonts}
\usepackage{amssymb}
\usepackage{amsmath}
\usepackage{amsthm}
\usepackage{tikz}
\usepackage{graphicx}
\usepackage{multirow}
\usepackage{subfigure}
\usepackage{eurosym}
\usepackage{xcolor}
\usepackage{enumitem}
\usepackage{pifont}

\RequirePackage[colorlinks,citecolor=blue,linkcolor=red,urlcolor=blue]{hyperref}
\RequirePackage{natbib}

\usetikzlibrary{shapes,decorations}

\newtheorem{theorem}{Theorem}
\newtheorem{proposition}[theorem]{Proposition}

\newtheorem{lemma}[theorem]{Lemma}
\newtheorem{example}[theorem]{Example}
\newtheorem{definition}[theorem]{Definition}

\newtheorem{axiom}{Axiom}
\newtheorem{nsc}{No Shifted Cycles (NSC)}

\DeclareFontFamily{U}{mathx}{\hyphenchar\font45}
\DeclareFontShape{U}{mathx}{m}{n}{
	<5> <6> <7> <8> <9> <10>
	<10.95> <12> <14.4> <17.28> <20.74> <24.88>
	mathx10
}{}
\DeclareSymbolFont{mathx}{U}{mathx}{m}{n}
\DeclareMathSymbol{\bigtimes}{1}{mathx}{"91}

\makeatletter

\renewcommand*{\@seccntformat}[1]{%
	\csname the#1\endcsname.\quad}
\makeatother

\begin{document}
	
	\title{Choice via AI}
	
	\author{
		\textsc{Christopher Kops}\footnote{Department of Quantitative Economics, Maastricht University, P.O. Box 616, 6200 MD, Maastricht, The Netherlands; Homepage: \url{https://sites.google.com/site/janchristopherkops/home}; E-mail: \href{mailto: j.kops@maastrichtuniversity.nl}{\texttt{ j.kops@maastrichtuniversity.nl}}} \\
		\small{\textit{Maastricht University}}\\ \\
		\textsc{Elias Tsakas}\footnote{Department of Microeconomics and Public Economics, Maastricht University, P.O. Box 616, 6200 MD, Maastricht, The Netherlands; Homepage: \url{www.elias-tsakas.com}; E-mail: \href{mailto:e.tsakas@maastrichtuniversity.nl}{\texttt{e.tsakas@maastrichtuniversity.nl}}} \\
		\small{\textit{Maastricht University}}}
	
\date{\small{\today}}

	\maketitle
	
\begin{abstract}
\noindent This paper proposes a model of choice via agentic artificial intelligence (AI). A key feature is that the AI may misinterpret a menu before recommending what to choose. A single acyclicity condition guarantees that there is a monotonic interpretation and a strict preference relation that together rationalize the AI's recommendations. Since this preference is in general not unique, there is no safeguard against it misaligning with that of a decision maker. What enables the verification of such AI alignment is interpretations satisfying double monotonicity. Indeed, double monotonicity ensures full identifiability and internal consistency. But, an additional idempotence property is required to guarantee that recommendations are fully rational and remain grounded within the original feasible set.
\vspace{0.3\baselineskip}
		
\noindent \textsc{Keywords:} WARP, preferences, AI
		
\noindent \textsc{JEL codes:}  D01, D91
\end{abstract}

\section{Introduction}

Suppose that a decision maker (DM) seeks advice from an AI Agent before choosing from a menu of available choices. Based on the recommendations that she receives, she is interested in answering two questions. First, does the AI make rational recommendations? Second, are these recommendations aligned with her own preferences?

In this paper, we address these questions in the context of a choice theoretic framework. In particular, we view the AI as a choice function that makes a recommendation for each menu. The only caveat in comparison with the standard framework is that the AI may misinterpret the menu from which the DM is seeking to choose. As a result, it may ignore some of the choices in the menu (similarly to the literature on consideration sets), or it may even end up making a recommendation that does not belong to the menu. For instance, suppose that the DM asks the AI to choose ``the most interesting paper from the Economics literature on political polarization". However, while the DM has a clear definition of the Economics literature (e.g., the set of Economics journals), the AI has a broader interpretation and thus ends up recommending a paper from the Annual Review of Political Science. 

Such misinterpretations are common in AI's that are based on large language models (LLM's), whose interpretation of a context typically depends on the data on which it has been trained. But more importantly, the AI's interpretation of each context is unobservable to the DM. Throughout the paper, we will only impose a weak monotonicity condition on the interpretation operator, according to which basic implications are interpreted correctly, i.e., the AI understands that a paper belonging to the Economics literature also belongs to the literature on Social Sciences, even if it distorts the meaning of ``Economics literature" and ``Social Sciences literature" respectively. 

Going back to our first question, the problem boils down to identifying violations of the usual rationality postulates despite the fact that these are possibly confounded with distortions due to potential misinterpretations of the menus. In particular, we ask whether there is a monotonic interpretation operator and a strict preference relation that together rationalize the observed recommendations. 

Our first main result establishes that a single condition in the style of a classic acyclicity condition allows us characterizes the set of rationalizable choice functions. This is a major step in that it guarantees that potential mistakes made by the AI can be traced back solely to the misinterpretations of the choice problems, rather than on choice inconsistencies. As such, training the AI with more data so that it learns to interpret the choice environment correctly, paves the way to fully solve the underlying problem.

Yet, preferences may remain only partially identifiable. Even though we can tell whether the AI's recommendations are rationalizable, we cannot uniquely identify the preference relation, as recommendations are still distorted by the AI's misinterpretations. Interestingly, preference identification does not guarantee correct interpretation of the choice problems. Instead, the two components of our model (viz., the preference relation and the interpretation operator) can be identified jointly from the choice function if the interpretation operator satisfies a double monotonicity condition. According to this condition, not only do we need the AI to correctly interpret logical implications, but moreover whenever the AI thinks that A implies B, this implication does actually hold. In the context of our earlier example, if the AI interprets the Economics literature as a subset of the Social Sciences literature, it is right to do so.

Our second main result then characterizes the set of choice functions that are rationalizable by a strict preference relation together with a doubly monotonic interpretation operator. This is done, by means of standard conditions which rule out binary cycles and choice reversals. 

The latter is enough in order to test whether the AI's recommendations are aligned with the DM's preferences. Yet, this does not mean that the AI's recommendations are the ones an agent with correct interpretation of the world would have made. To achieve the latter, we would need to guarantee that the AI's interpretation operator is grounded, i.e., it interprets choice problems correctly.

Our third main result characterizes the choice functions that are consistent with a strict preference relation and a grounded interpretation operator by means of the usual revealed axiom of revealed preference (WARP), thereby making the AI's recommendations rational in the traditional economic sense of the weak axiom of revealed preferences.

While our paper has been framed in the context of recommendations received by an AI agent ---which we view as a relevant and timely application--- all our results hold verbatim in any setting where the DM receives choice recommendations by an advisor whose interpretation of the world might be distorted and/or the preferences might differ from the ones of the DM.

This paper is related to the broad literature on choice behavior when only a subset of the available alternatives is considered for choice, because of factors such as norms \citep{Baigent1996}, framing effects \citep{Salant2008}, categorization \citep{Manzini2012a}, clustering \citep{Kops2022}, cognitive limitations \citep{Eliaz2011,Masatlioglu2012,Lleras2017} or social influence \citep{Cuhadaroglu2017,Borah2020}. The main difference of our work is that we also allow for consideration of alternatives that are not in the actual choice set. 

The behavioral foundation of our choice procedure also shares similarities to the literature modeling choice as the result of (sequentially) maximizing more than one preference relation \citep{Manzini2007,Manzini2012b, Cherepanov2013,Tyson2015,Horan2015,Horan2016,Kops2018,Armouti-Hansen2018} or maximizing over tuples of alternative and opportunity costs \citep{Kops2024,Manzini2025}. At a broader level, this paper also relates to applications of choice theoretic models to competitive markets \citep{Eliaz2011b}, finance \citep{Stango2014} and medical decision-making \citep{Katz2019}.

Our work also ties well with some earlier work on AI. Since we assume an AI agent rather than a human being as the central decision-maker, our model has to incorporate how LLMs transform \citep{Vaswani2017} the data they are fed. An immediate consequence lies in the so-called alignment problem of whether or not the AI's preferences clash with the preferences of a human decision-makers. In this regard, \cite{Kim2024} check whether LLMs are able to learn preferences and provide recommendations when it comes to choice under risk. \cite{Caplin2025} test how well machine learning predictions align with cognitive economic model.

The paper is organized as follows. Section \ref{SECT:Setup} lays out the setup. Section \ref{SECT:AI Agent} defines the model of choice by an agentic AI. Section \ref{SECT:Behavioral Foundation} provides characterization and identification. Section \ref{SECT:Rational AI Agent Choice} shows how to extend the baseline model to guarantee internal consistency. Section \ref{SECT:Grounded and Rational AI Agent Choice} develops it even further to also ensure groundedness.  Section \ref{SECT:Elaborations and Extensions} elaborates on some of our modeling assumptions and presents an extension guaranteeing groundedness, but not rationality. Section \ref{SECT:Conclusion} concludes.

\section{Setup}\label{SECT:Setup}
Let $X$ be a finite set of alternatives with typical elements denoted by $x$, $y$, $z$ etc. $\mathcal{P}(X)$ denotes the set of non-empty subsets of $X$ with typical elements $R$, $S$, $T$ etc. We refer to any such set as a choice problem. A choice function on $X$ is a mapping, $c:\mathcal{P}(X)\to X$.




\section{AI Agent}\label{SECT:AI Agent}
Unlike LLMs which only generate text, an AI agent can translate intent into procedures carried out in the real world. An AI agent can act and make choices. In particular, it can provide recommendations to a DM who seeks its advice.

Modern AI agents have an LLM with a transformer architecture \citep{Vaswani2017} as their policy core. This architecture enables them to convert text into tokens, but also to further contextualize and interpret them. The AI's interpretation of a context then typically depends on the data on which it has been trained. We assume that training and fine-tuning of the LLM enables the AI to correctly interpret basic logical implications, i.e., the AI understands the subset relations in $\mathcal{P}(X)$.

\begin{definition}\label{DEF:Interpretation operator}\textsc{(Interpretation Operator).}
An interpretation operator is a mapping $I:\mathcal{P}(X)\rightarrow \mathcal{P}(X)$ that satisfies the following basic condition:
\begin{itemize}[leftmargin=30pt]
\item[$(I_M)$] \textsc{Monotonicity ---} For all $S,T\in \mathcal{P}(X):$ $S\subseteq T \Rightarrow {I}(S)\subseteq {I}(T)$
\end{itemize}
\end{definition}

Monotonicity is the single key property of the interpretation operator. It ensures that howsoever the AI distorts the meaning of certain choice problems, these distortions still preserve the subset-relationship between the non-empty subsets of $X$.\footnote{In the context of artificial intelligence \citep{Russell1995}, especially when it comes to heuristic functions, monotonicity is also often referred to as consistency.} This is the minimal requirement that training and fine-tuning the AI should achieve.

We can now introduce the choice procedure that we are proposing in this paper. The key feature of this procedure is that the AI agent chooses on behalf of our rational DM. So, the procedure requires that the AI picks the best element from the interpreted choice problem according to a preference relation.

\begin{definition}
A choice function $c:\mathcal{P}(X)\to X$ is an AI agent's choice (AIC) if there exists a strict preference relation $\succ$ on $X$ and an interpretation operator $I$ on $\mathcal{P}(X)$ such that for any choice problem $S\in\mathcal{P}(X)$,
$$
c(S)=\{x\in {I}(S)|\; x\succ y,\text{ for all }y\in {I}(S)\setminus\{x\}\}
$$
\end{definition}

Under the AIC, $I(S)$ can also be thought of as resulting from a Retrieval-Augmented Generation (RAG) step \citep{Shi2024}, where the AI retrieves a context before generating a recommendation.

\section{Behavioral Foundation}\label{SECT:Behavioral Foundation}
This section provides a behavioral (i.e., choice-based) characterization of the AIC procedure. It identifies which choice functions are rationalizable by interpretation operator and preference relation.

Let us reiterate here that such a characterization enables any outside observer to verify whether choice data is consistent with the AIC procedure or not. As it turns out, this procedure can be characterized by the following single axiom.

\begin{nsc}
For all $S_1,\dots,S_{n+1},T_1,\dots,T_{n+1}\in \mathcal{P}(X)$
$$
[c(S_i)=x_i,\; S_i\subset T_i,\forall i=1,\dots,n+1,\text{ and }c(T_i)=x_{i+1},\forall i=1,\dots,n] \Rightarrow [c(T_{n+1})\neq x_1]
$$
\end{nsc}

NSC is in the style of a classic acyclicity condition such as No Binary Cycles \citep{Manzini2025}. The AI's choice from the $T_i$'s is all about this acyclicity. Indeed, $c(T_1)=x_2$, $c(T_2)=x_3$, \dots, $c(T_{n})=x_{n+1}$, $c(T_{n+1})=x_{1}$ would reveal preferences to be cyclic.\footnote{Proposition \ref{PROP:revealed preference} and its ensuing paragraph illustrate how NSC can be restated directly in terms of the AI's revealed preferences.} But, the definition of AIC guarantees that the AI's preferences are acyclic. That is why NSC restricts the AI's choice to satisfy $c(T_{n+1})\neq x_1$. The added layer of complexity comes from the AI interpreting choice problems. The AI's choice from the $S_i$'s then establishes that each choice problem $T_{i}$ is interpreted in such a way that the interpreted choice problem ${I}(T_i)$ involves $x_i$.

The following result then establishes that NSC provides a behavioral characterization of AIC.

\begin{theorem}\label{THM:main result}
Let $X$ be a finite set of alternatives. A choice function $c$ on $X$ is an AIC if and only if it satisfies NSC.
\end{theorem}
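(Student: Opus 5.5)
The plan is to prove the two directions separately. Necessity is a direct argument from the definition of AIC. Sufficiency uses an explicit construction: a revealed relation that NSC forces to be acyclic, together with an interpretation operator built from the choices on subproblems.

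\emph{A remark on reading NSC.} With $n=0$ and $x_1$ chosen from both $S_1\subset T_1$, the premise of NSC holds and its conclusion fails. An AIC can do this whenever $x_1$ remains $\succ$-best in $I(T_1)$. So I read NSC as being about genuine shifts: consecutive alternatives differ, $x_{i+1}\neq x_i$ and $x_1\neq x_{n+1}$. This is how the informal discussion after the axiom describes it, as ruling out a revealed preference cycle. I also read $S_i\subset T_i$ as proper inclusion.

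\emph{Necessity.} Let $c$ be an AIC with $(\succ,I)$, and suppose the premise of NSC holds.
\begin{itemize}
\item Since $c(S_i)=x_i$, we have $x_i\in I(S_i)$.
\item By $(I_M)$, $I(S_i)\subseteq I(T_i)$, so $x_i\in I(T_i)$.
\item Since $c(T_i)=x_{i+1}\neq x_i$ is the $\succ$-maximum of $I(T_i)$, we get $x_{i+1}\succ x_i$ for $i=1,\dots,n$.
\item Applying the same reasoning to $S_{n+1}\subset T_{n+1}$, the equality $c(T_{n+1})=x_1$ would give $x_1\succ x_{n+1}$.
\end{itemize}
Together these form a strict cycle $x_1\prec x_2\prec\dots\prec x_{n+1}\prec x_1$, which contradicts transitivity and asymmetry of $\succ$. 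Hence $c(T_{n+1})\neq x_1$.

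\emph{Sufficiency.} Define the revealed relation $P$ on $X$ by $y\,P\,x$ if and only if $y\neq x$ and there exist $S\subset T$ with $c(S)=x$ and $c(T)=y$.
\begin{itemize}
\item \emph{$P$ is acyclic.} A $P$-cycle $x_1\,P^{-1}\,x_2\cdots$ is exactly a sequence $x_1,\dots,x_{n+1}$ with witnessing pairs $S_i\subset T_i$, $c(S_i)=x_i$, $c(T_i)=x_{i+1}$, closing with $c(T_{n+1})=x_1$. This is precisely what NSC forbids.
\item \emph{The preference.} Since $X$ is finite and $P$ is acyclic, its transitive closure is a strict partial order. Let $\succ$ be any linear extension of it.
\item \emph{The interpretation.} Define $I(S)=\{c(R): R\in\mathcal{P}(X),\ R\subseteq S\}$. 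This set is nonempty because it contains $c(S)$. It is monotone because $S\subseteq T$ enlarges the index set.
\item \emph{Rationalization.} Take any $y\in I(S)$ with $y\neq c(S)$. Then $y=c(R)$ for some $R\subseteq S$, and necessarily $R\neq S$, so $R\subset S$. Hence $c(S)\,P\,y$, and therefore $c(S)\succ y$. So $c(S)$ is the $\succ$-maximum of $I(S)$, and $c$ is an AIC.
\end{itemize}

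The main obstacle is conceptual rather than technical. It is choosing the right revealed relation and the right canonical interpretation, namely "everything chosen from a subproblem", so that NSC matches acyclicity of $P$ exactly. Alongside this, the statement of NSC must be handled carefully: the nondegeneracy $x_{i+1}\neq x_i$ and the strictness of $\subset$ are what make necessity go through.
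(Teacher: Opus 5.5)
Your proof is correct and takes essentially the paper's route. The necessity argument is the same: monotonicity places $x_i$ in $I(T_i)$, and transitivity of $\succ$ then yields a contradiction. The sufficiency construction is also the same: the canonical interpretation $I(S)=\{c(R): R\subseteq S\}$ and the revealed relation $P$ (the alternative chosen from the superset over the one chosen from the subset), with NSC giving acyclicity of $P$.

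Your nondegeneracy reading $x_{i+1}\neq x_i$, $x_1\neq x_{n+1}$ is exactly what the paper's necessity step (``choice from $T_1$ implies $x_2\succ x_1$'') uses silently. Passing to a linear extension of the transitive closure of $P$ is a cleaner finish than the paper's direct use of $P$, which need not itself be a strict linear order. Strictness of $\subset$, by contrast, plays no real role once consecutive choices differ.
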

\begin{proof}
Please refer to Appendix \ref{PROOF:main result}.
\end{proof}

The AIC is based on two key parameters that enter into the AI's decision-making procedure: the strict preference ranking $\succ$ and the interpretation operator ${I}$. Theorem \ref{THM:main result} establishes that in NSC there is a testable condition that can be applied to any given choice data to determine whether this data can be thought of as resulting from an AIC procedure. 

Now, suppose we have choice data that is consistent with the AIC logic. The question we address next is about the extent to which the two key parameters of the AIC procedure can be uniquely identified from such data. We first consider the question of identification of the AI's preferences. In contrast to rational choice theory, with theories of bounded rationality like the AIC, there may be multiple possible preferences which can rationalize the same choice data. To check whether the AI ranks $x$ above $y$, it, thus, seems natural to check whether every possible representation of choices ranks $x$ above $y$. The following definition is useful to organize the discussion.

\begin{definition}
Let $c$ be an AIC. We say that $x$ is revealed to be preferred to $y$ by the AI, if for any $(\succ,{I})$ that is part of a AIC representation of $c$, we have $x\succ y$.
\end{definition}

Checking every possible representation of choices may not be a very practicable method. Fortunately, there is a much simpler way to identify the AI's preferences. It heavily draws on the idea of combining choice reversals and subset relationships which is underlying NSC. Indeed, we define the following binary relation $\succ$ on $X$ via
$$
x\succ y\text{ if }c(T)=x\text{ and }c(S)=y\text{ for some }S,T\in\mathcal{P}(X)\text{ with }S\subset T
$$
The binary relation $\succ$ may naturally be incomplete, but never intransitive. What is more, by Lemma \ref{LEM:asymmetric}, $\succ$ is asymmetric and, by Lemma \ref{LEM:acyclic}, it is also acyclic. As such, it has a transitive closure which we denote by $\succ^*$. If $x\succ^*y$, then $x$ is revealed to be preferred to $y$. Loosely speaking, this is true because if $x\succ z$ and $z\succ y$, then the transitivity of the underlying strict preference relation allows us to infer that $x$ is indeed revealed to be preferred to $y$.  The question remains whether $\succ^*$ really captures all revealed preferences and, at the same time, not more than that. The next proposition establishes that $\succ^*$ really is the revealed preference.

\begin{proposition}\label{PROP:revealed preference}
Let $c$ be an AIC. Then $x$ is revealed to be ``preferred'' to $y$ if and only if $x\succ^*y$.
\end{proposition}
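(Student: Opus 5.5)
Both directions of Proposition \ref{PROP:revealed preference} have to be shown; I take them in turn. Write $\succ_c$ for the binary relation defined from choices just before the proposition, with transitive closure $\succ^*$. I write $\rhd$ for the preference of an arbitrary AIC representation $(\rhd,I)$, to keep it apart from $\succ_c$.

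\emph{Sufficiency.} Suppose $x\succ^* y$. By transitivity of $\rhd$, it suffices to show that $x\succ_c y$ implies $x\rhd y$. So take $c(T)=x$ and $c(S)=y$ with $S\subset T$.
\begin{itemize}
\item Since $y=c(S)\in I(S)$ and $I(S)\subseteq I(T)$ by monotonicity, we get $y\in I(T)$.
\item Since $x$ is the $\rhd$-best element of $I(T)$, we get $x\rhd y$ when $x\neq y$.
\item The case $x=y$ cannot occur, because $\succ_c$ is asymmetric by Lemma \ref{LEM:asymmetric}, which in particular makes it irreflexive.
\end{itemize}
Chaining along a path in $\succ_c$ then gives $x\rhd y$ for every representation, which is exactly revealed preference.

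\emph{Necessity.} Suppose $x\not\succ^* y$. I will exhibit one AIC representation with $y\rhd x$. If $x=y$ this is trivial, so assume $x\neq y$.
\begin{enumerate}
\item Since $\succ^*$ is a strict partial order (by Lemma \ref{LEM:acyclic}) and $x\not\succ^* y$, adding the pair $(y,x)$ and taking the transitive closure still yields a strict partial order. Adding $(y,x)$ could only create a cycle if $x\succ^* y$ already held.
\item Extend this order to a linear order $\rhd$ by Szpilrajn's theorem. Then $\rhd$ contains $\succ_c$ and satisfies $y\rhd x$.
\item Define $I(S)=\{c(S)\}\cup\{z: c(S)\rhd z \text{ and } z=c(R) \text{ for some } R\subset S\}$. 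Equivalently, $I(S)=\{c(S)\}\cup\{c(R):R\subseteq S,\ c(S)\rhd c(R)\}$.
\end{enumerate}

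Two properties of this $I$ must be checked.
\begin{itemize}
\item \emph{The $\rhd$-best element of $I(S)$ is $c(S)$.} This holds by construction, since every other element of $I(S)$ lies strictly below $c(S)$ in $\rhd$.
\item \emph{$I$ is monotone.} Take $S\subseteq T$ and $z\in I(S)$. Then $z=c(R)$ for some $R\subseteq S\subseteq T$, so $R\subseteq T$. It remains to show $z=c(T)$ or $c(T)\rhd z$.
\begin{itemize}
\item If $R=T$, then $z=c(T)$.
\item Otherwise $R\subset T$, so $c(T)\succ_c c(R)$ whenever $c(T)\neq c(R)$. Hence $c(T)\rhd z$, because $\rhd$ extends $\succ_c$.
\end{itemize}
In both cases $z\in I(T)$.
\end{itemize}

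The key simplification is that $I$ is built from every subset $R$ of $S$, rather than only from $c(S)$. This avoids any appeal to the construction used in the proof of Theorem \ref{THM:main result}. It also makes the condition in item 3 automatic: $c(S)\rhd c(R)$ for all $R\subset S$ with $c(R)\neq c(S)$.

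The main obstacle is making sure that the representation built in the necessity part is a genuine AIC for an \emph{arbitrary} linear extension of $\succ^*$. This requires that every linear order containing $\succ_c$ works, and the monotonicity check above shows that it does. Hence the only genuine requirement is the acyclicity of $\succ_c$, which Lemma \ref{LEM:acyclic} supplies from NSC.
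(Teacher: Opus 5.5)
Your proof is correct and follows the same route as the paper. Sufficiency goes through $c(S)\in I(S)\subseteq I(T)$ and transitivity of the representing preference. Necessity extends $\succ^*$ to a linear order ranking $y$ above $x$ and pairs it with the revealed interpretation. Since $\rhd\supseteq\succ_c$, your $I(S)$ collapses to $\{c(R):R\subseteq S\}$, which is exactly the operator built in the proof of Theorem~\ref{THM:main result}, so you re-verify that construction rather than avoid it.

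One small correction: $\succ_c$ as literally defined is not irreflexive, since $c(\{x\})=c(\{x,y\})=x$ yields $x\succ_c x$. So instead of citing Lemma~\ref{LEM:asymmetric}, restrict $\succ_c$ to pairs $x\neq y$, as the paper does for $P$ and as the proposition requires anyway.
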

\begin{proof}
Please refer to Appendix \ref{PROOF:revealed preference}.
\end{proof}

This result is also illuminating with respect to the characterization of AIC. Indeed, it enables us to restate NSC as the demand that the AI's underlying preferences are acyclic. To see this, note that we can use the revealed preference $\succ^*$ to formally restate NSC in the following way
$$
\text{If }\,x_{n+1}\succ^*x_n,\; x_{n}\succ^*x_{n-1},\dots,\; x_2\succ^* x_1,\text{ then }\neg[x_1 \succ^* x_{n+1}]
$$

Proposition \ref{PROP:revealed preference} furthermore shows that multiple preferences may be consistent with AIC choice data. As such, all of the AI's recommendations may align with the DM's preferences without the underlying preferences truly aligning.

Next, we consider the question of identifying the AI's interpretation operator. Again, to check whether the AI considers $x$ at $S$ for its choice, it seems natural to check whether every possible AIC-representation of the AI's choices specifies that $x$ receives the AI's consideration at $S$. In a similar way as before, the following definition is useful to organize the discussion.

\begin{definition}
Let $c$ be a CSC. We say that $x$ is revealed to receive consideration at $S\in\mathcal{P}(X)$ by the AI, if for any $(P,{I})$ that is part of an AIC representation of $c$, we have $x\in{I}(S)$
\end{definition}

It turns out, there also exists a simple way to identify the AI's revealed consideration set. To this end, we define the following consideration set ${I}^*$ on $\mathcal{P}(X)$ via
$$
x\in{I}^*(T)\text{ if }c(S)=x\text{ for some }S\in\mathcal{P}(X)\text{ with }S\subset T
$$

Again, the question remains whether ${I}^*$ really captures all revealed consideration and, at the same time, not more than that. The next proposition establishes that ${I}^*$ really is the revealed interpretation operator.

\begin{proposition}\label{PROP:revealed consideration}
Let $c$ be an AIC. Then $x$ is revealed to the AI's consideration at $T$ if and only if $x\in{I}^*(T)$
\end{proposition}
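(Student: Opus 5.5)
The plan is to prove the two directions separately. The ``only if'' direction (revealed consideration forces $x\in{I}^*(T)$) will come from exhibiting one particular AIC representation whose interpretation operator is exactly ${I}^*$. The ``if'' direction will follow directly from monotonicity. Throughout I read $S\subset T$ as weak inclusion, as in NSC. Under that reading $c(T)\in{I}^*(T)$ always, and in any representation $c(T)\in{I}(T)$ trivially, so the case $S=T$ causes no trouble.

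\emph{If.} Suppose $x\in{I}^*(T)$, so $c(S)=x$ for some $S\subset T$. Take any AIC representation $(\succ,{I})$ of $c$. Since $c(S)$ is the $\succ$-best element of ${I}(S)$, we have $x\in{I}(S)$. Monotonicity $(I_M)$ then gives ${I}(S)\subseteq{I}(T)$, hence $x\in{I}(T)$. Because the representation was arbitrary, $x$ is revealed to receive consideration at $T$.

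\emph{Only if.} Suppose $x\notin{I}^*(T)$. It suffices to build a single AIC representation $(\succ',{I}')$ with $x\notin{I}'(T)$. I will take ${I}'={I}^*$ and let $\succ'$ be any linear extension of the transitive closure $\succ^*$. This linear extension exists because $\succ^*$ is asymmetric and acyclic by Lemmas \ref{LEM:asymmetric} and \ref{LEM:acyclic}, which apply since $c$ is an AIC. Three things then need checking.
\begin{itemize}
\item[(i)] ${I}^*$ maps into $\mathcal{P}(X)$: ${I}^*(S)$ is nonempty because it contains $c(S)$.
\item[(ii)] ${I}^*$ satisfies $(I_M)$: if $S\subseteq S'$, then every $R\subset S$ also satisfies $R\subset S'$, so ${I}^*(S)\subseteq{I}^*(S')$.
\item[(iii)] $c(S)$ is the $\succ'$-maximum of ${I}^*(S)$ for every $S$. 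Take any $y\in{I}^*(S)$ with $y\neq c(S)$. Then $y=c(R)$ for some $R\subset S$, so by the definition of the revealed relation $c(S)\succ y$. Since $\succ'$ extends $\succ$, this gives $c(S)\succ' y$.
\end{itemize}
Hence $(\succ',{I}^*)$ is an AIC representation of $c$ in which $x\notin{I}^*(T)$, so $x$ is not revealed to receive consideration at $T$.

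The only delicate step is (iii). There the work is done by the observation that the revealed relation $\succ$ was defined precisely so that $c(S)$ beats every earlier choice made from a subset of $S$, together with the acyclicity lemmas that let us linearly extend it. This is essentially the construction that should also underlie the sufficiency part of Theorem \ref{THM:main result}. If that proof uses exactly this pair, I would simply cite it instead of re-verifying (i)--(iii).
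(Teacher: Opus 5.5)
Your proof is correct and follows the paper's route. The ``if'' direction is the same monotonicity argument. The ``only if'' direction uses exactly the representation built in the sufficiency part of Theorem~\ref{THM:main result}, whose interpretation operator is ${I}^*$. Your one refinement is to replace the paper's relation $P$, which coincides with the revealed relation on distinct pairs but need not be complete, by a linear extension of it; this tightens a step the paper glosses over, since a strict preference must rank every pair.
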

\begin{proof}
Please refer to Appendix \ref{PROOF:revealed consideration}.
\end{proof}

\begin{example}
Let $X=\{x,y,z\}$ denote the set of alternatives and $x \succ y\succ z$ the DM's preferences. The AI's recommendations are summarized as follows
\begin{table}[h]
    \centering
    \begin{tabular}{c| c c c c}
        $S$ & $\{x,y\}$ & $\{y,z\}$ & $\{x,z\}$ & $\{x,y,z\}$\\
        \hline
        ${I}(S)$ & $\{x\}$ & $\{y\}$ & $\{x\}$ & $\{x,y\}$ \\
        \hline
        $c(S)$ & $x$ & $y$ & $x$ & $x$ 
    \end{tabular}
\end{table}

Observe that this choice data satisfies NSC. But, the preferences underlying the corresponding AIC are not uniquely identifiable. Indeed, any preference ranking $\succ$ on $X$ satisfying $x\succ y$ is a potential candidate. In other words, it is not clear whether the AI ranks $z$ above $x$ and $y$, in between or below.
\end{example}

\section{Rational AI Agent Choice}\label{SECT:Rational AI Agent Choice}
The monotonic interpretation operator of an AIC preserves the subset-relationship between the non-empty subsets of $X$. But, it does not prevent the AI from interpreting two sets $S,T\in\mathcal{P}(X)$ which are incomparable with respect to set inclusion ($S\nsubseteq T$ and $S\nsubseteq T$) into sets sharing a subset-relation (${I}(S)\subseteq {I}(T)$). 

To address this, we next strengthen the connection between the subset-relationship in $\mathcal{P}(X)$ and the one in $\{{I}(S)|S\in\mathcal{P}(X)\}$. The next definition pins down how one can do so and refines the AIC procedure accordingly.

\begin{definition}
A choice function $c:\mathcal{P}(X)\to X$ is a rational AI agent's choice (RAIC) if it is an AIC and the corresponding interpretation operator satisfies
\begin{itemize}[leftmargin=40pt]
\item[$(T_{DM})$] \textsc{Double monotonicity ---} For all $S,T\in\mathcal{P}(X):$ ${I}(S)\subseteq {I}(T)\Leftrightarrow S\subseteq T$
\end{itemize}
\end{definition}

Double Monotonicity does not only ensure that the interpretation operator preserves the subset-relationship between the non-empty subsets of $X$. Rather, it also matches every subset-relation in $\{{I}(S)|S\in\mathcal{P}(X)\}$ uniquely with the corresponding subset-relation in $X$ prior to interpretation. In other words, the posets $(\mathcal{P}(X),\subseteq)$ and $(\{{I}(S)|S\in\mathcal{P}(X)\},\subseteq)$ are order isomorphic.

Next, we turn to a behavioral characterization of the RAIC procedure. Three axioms characterize the rational version of the AIC procedure.


The first axiom rules out pairwise cycles. It is a standard and well-known condition in choice theory, here given in a slightly restated formulation.

\begin{axiom}[No Binary Cycles (NBC)]
For all $x,y,z\in X$:
$$
[c(\{x,y\})\neq c(\{y,z\})\text{ and }c(\{x,z\})\neq c(\{y,z\})]\Rightarrow [c(\{x,y\})= c(\{x,z\})]
$$
\end{axiom}

The formulation of NBC here restricts attention to which choices from binary sets differ and which have to coincide. This way, it avoids taking a stance on how exactly the AI interprets the corresponding choice problems.

The second axiom is a condition akin to Contraction Independence (also known as Property $\alpha$ or Chernoff's Condition). In our context, it rules out menu effects by requiring that if an alternative is chosen from a large set, it must also be chosen from any smaller subset whose interpretation contains it.

\begin{axiom}[C-Contraction Independence (CCI)]
For all $R,S,T\in\mathcal{P}(X)$:
$$
[R\subset S\subset T,\; c(R)=c(T)=x]\Rightarrow [c(S)=x]
$$
\end{axiom}

At the heart of CCI is the contraction from the set $T$ to its subset $S$. As for NSC the added layer of complexity comes again from the AI interpreting choice problems. The AI's choice from the set $R$ then establishes that the choice problem $S$ is
interpreted in such a way that the interpreted choice problem ${I}(S)$ involves $x$.

The third axiom specifies that a rational AI agent notices when two alternatives are different. This manifests in the agent's choice behavior as follows.

\begin{axiom}[Noticeable Difference (ND)]
For all $x,y\in X$:
$$
[x\neq y]\Rightarrow [c(\{x\})\neq c(\{y\})]
$$
\end{axiom}

Whenever two alternatives are different, ND requires that the agent's choice from the corresponding interpreted singleton sets respects this difference. In other words, the AI's choice from two singleton sets only coincides when these are the same sets.

The following result then establishes that NBC, CCI and ND together form a behavioral characterization of RAIC.

\begin{theorem}\label{THM:main result2}
Let $X$ be a finite set of alternatives. A choice function $c$ on $X$ is an RAIC if and only if it satisfies NBC, CCI, and ND.
\end{theorem}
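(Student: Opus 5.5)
The plan is to first reduce double monotonicity to a much more concrete object, and then handle both directions of Theorem \ref{THM:main result2} through that reduction. \textbf{Step 1 (structure of doubly monotonic operators).} Suppose $I$ satisfies $(T_{DM})$, and let $|X|=n$. Since $S\subsetneq T$ forces $I(S)\subsetneq I(T)$, take any maximal chain $\{x\}=S_1\subsetneq S_2\subsetneq\dots\subsetneq S_n=X$. Its image is a strictly increasing chain of non-empty subsets of $X$, so $|I(S_k)|\ge k$. As $|I(X)|\le n$, we get $|I(S_k)|=k$ for every $k$. Every $S$ lies on some maximal chain, hence $|I(S)|=|S|$ for all $S$. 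In particular each $I(\{x\})$ is a singleton $\{\sigma(x)\}$. Injectivity of $I$, which follows from $(T_{DM})$, makes $\sigma$ a permutation of $X$. Finally, monotonicity gives $\sigma(S)\subseteq I(S)$, and the two sets have equal size, so $I(S)=\sigma(S)$. Hence an RAIC is exactly a choice function of the form $c(S)=\sigma(m(S))$, where $m(S)$ is the $\succ'$-maximal element of $S$ for the strict order $x\succ' y \Leftrightarrow \sigma(x)\succ\sigma(y)$. Conversely, any permutation $\sigma$ gives an operator $I(S)=\sigma(S)$ that satisfies $(T_{DM})$.

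\textbf{Step 2 (necessity).} With $c=\sigma\circ m$ and $m$ a rational choice function, the three axioms are checked directly.
\begin{itemize}
\item ND holds because $c(\{x\})=\sigma(x)$ and $\sigma$ is injective.
\item NBC holds because $\sigma$ is injective, so equalities and inequalities among $c$-values on binary sets coincide with those among $m$-values. For $m$, a transitive order cannot produce three pairwise distinct binary choices. The hypothesis of NBC forces $m(\{y,z\})$ to differ from the other two choices, and transitivity then makes $m(\{x,y\})=m(\{x,z\})$.
\item CCI holds because $c(R)=c(T)$ gives $m(R)=m(T)=:w$. Then $w\in R\subset S$ and $w$ is $\succ'$-best in $T\supset S$, so $m(S)=w$.
\end{itemize}

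\textbf{Step 3 (sufficiency).} Define $\sigma(x):=c(\{x\})$; by ND and finiteness it is a permutation. Set $m:=\sigma^{-1}\circ c$. By Step 1 it suffices to show that $m$ is rationalizable by a strict order, i.e.\ $m(S)=\max_{\succ'}S$; then $\succ:=\sigma(\succ')$ together with $I(S)=\sigma(S)$ is an RAIC representation. The plan has three parts.
\begin{itemize}
\item First establish \emph{groundedness of $m$}, i.e.\ $m(S)\in S$, equivalently $c(S)\in\sigma(S)$.
\item Then use NBC on the binary sets to get that $x\succ' y\Leftrightarrow m(\{x,y\})=x$ is complete, asymmetric and transitive.
\item Finally use CCI to pass from binary sets to general menus. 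If $w=m(T)$, then $m(\{w\})=w$, so CCI with $R=\{w\}\subset\{w,y\}\subset T$ yields $m(\{w,y\})=w$ for every $y\in T$. Hence $w$ is $\succ'$-maximal in $T$.
\end{itemize}

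\textbf{Main obstacle.} I expect the hard step to be the groundedness claim $c(S)\in\sigma(S)$. None of the axioms states it explicitly, so it has to be extracted from the interplay of ND, NBC and CCI along chains. What I would try first is induction on $|S|$. The base case is given by the definition of $\sigma$. For the inductive step, suppose $c(S)=z\notin\sigma(S)$, write $z=\sigma(u)$ with $u\notin S$, and look at the chain $\{u\}\subset S\cup\{u\}$ together with the binary sets $\{s,u\}$ for $s\in S$. The aim is to derive either a binary cycle, contradicting NBC, or a CCI violation. If this cannot be closed with the axioms as stated, the fallback is to note that grounded binary choices, $c(\{x,y\})\in\{\sigma(x),\sigma(y)\}$, is exactly the extra content needed, and to check whether it follows from NBC combined with ND via a counting argument on the $\binom{n}{2}$ binary menus.
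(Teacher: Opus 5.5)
\textbf{There is a genuine gap, and it cannot be closed.} Your Steps 1 and 2 are correct. Step 3 also correctly reduces sufficiency to the claim $c(S)\in\sigma(S)$ for all $S$, where $\sigma(x)=c(\{x\})$: given that claim, NBC orders the binary menus, and CCI applied to $\{w\}\subseteq\{w,y\}\subseteq T$ finishes. But this groundedness claim does not follow from NBC, CCI and ND, so the ``if'' direction of the theorem is false as stated. Let $X=\{x,y,z\}$, set $c(\{w\})=w$ for each $w\in X$, and set $c(S)=z$ whenever $|S|\geq 2$.
\begin{itemize}
\item ND holds.
\item NBC holds, because all three binary menus return $z$. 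Its premise $c(\{x,y\})\neq c(\{y,z\})$ therefore fails for every triple of distinct elements, and degenerate triples are trivial.
\item CCI holds. If $c(R)=c(T)\in\{x,y\}$, then $R=S=T$ is a singleton. If $c(R)=c(T)=z$, then $R$ is $\{z\}$ or has at least two elements. So every $S\supseteq R$ is $\{z\}$ or has at least two elements, and $c(S)=z$.
\end{itemize}
Yet by your own Step 1, any doubly monotonic $I$ representing $c$ must satisfy $I(\{w\})=\{c(\{w\})\}=\{w\}$. Hence $I$ is the identity, which contradicts $c(\{x,y\})=z\notin\{x,y\}$. No induction on $|S|$ and no counting argument can deliver groundedness. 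The paper's own sufficiency proof has the same hole. Completeness of $P$ in Lemma \ref{LEM:complete and transitive} is asserted via ``either $c(\{u_x,u_y\})=x$ or $c(\{u_x,u_y\})=y$'', which is exactly binary groundedness. The final step also never checks that $c(T)\in I(T)$.

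Your fallback also undershoots: groundedness on binary menus alone is not enough, since your third bullet needs $m(T)\in T$ for every $T$. On $X=\{a,b,c,d\}$, take the rational choice function for $a\succ b\succ c\succ d$ and change only $c(\{b,c,d\})=a$. NBC and ND are untouched. CCI still holds, because no proper subset of $\{b,c,d\}$ has choice $a$ and its only proper superset is $X$. Again $I$ would have to be the identity, so this $c$ is not an RAIC. A true characterization needs an additional axiom such as $c(S)\in\{c(\{x\})\mid x\in S\}$ for all $S\in\mathcal{P}(X)$. Its necessity is immediate from your Step 1, and with it your Step 3 goes through verbatim.
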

\begin{proof}
Please refer to Appendix \ref{PROOF:main result2}.
\end{proof}

Next, we analyze what can be inferred from choice data that is consistent with the RAIC logic. While the same exercise for the AIC procedure in Section \ref{SECT:Behavioral Foundation} only allowed to partially identify the AI's preferences and the interpretation operator, for the RAIC we can always fully identify these two key parameters entering into the AI's decision-making procedure. The following result spells this out formally.

\begin{proposition}\label{PROP:revealed preference2}
Let $c$ be an RAIC. Then, we can fully identify the AI's interpretation operator ${I}$ and its preference relation $\succ$ via defining, for any $T\in\mathcal{P}(X)$,
$$
{I}(T)=\{c(\{x\})|\; x\in T\}
$$
and
$$
x\succ y\text{ if }c(T)=x\text{ and }c(S)=y\text{ for some }S,T\in\mathcal{P}(X)\text{ with }S\subset T
$$
\end{proposition}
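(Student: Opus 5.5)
The proof has three steps. First, double monotonicity forces $I$ to be a bijection of $\mathcal{P}(X)$ onto itself. Second, such a bijection must be induced by a permutation of $X$, which gives the formula for $I$. Third, the pair $(\{b\},\{a,b\})$ exposes every strict comparison, which gives the formula for $\succ$.

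\emph{Step 1: $I$ is an order automorphism.} Fix an RAIC representation $(\succ,I)$ of $c$. By $(T_{DM})$, $I(S)=I(T)$ implies both $S\subseteq T$ and $T\subseteq S$, so $I$ is injective. $I$ maps $\mathcal{P}(X)$, a finite set with $2^{|X|}-1$ elements, into itself. Hence it is a bijection of $\mathcal{P}(X)$ onto itself. Together with $(T_{DM})$, this makes $I$ an order automorphism of $(\mathcal{P}(X),\subseteq)$.

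\emph{Step 2: the formula for $I$.} An order automorphism maps minimal elements to minimal elements. The minimal elements of $(\mathcal{P}(X),\subseteq)$ are the singletons, so each $I(\{x\})$ is a singleton. An order automorphism also preserves least upper bounds, and in $(\mathcal{P}(X),\subseteq)$ the least upper bound is set union. Since every $T$ is the union of its singletons,
$$
I(T)=\bigcup_{x\in T} I(\{x\}).
$$
By the definition of AIC, $c(\{x\})\in I(\{x\})$, and $I(\{x\})$ is a singleton, so $I(\{x\})=\{c(\{x\})\}$. Therefore $I(T)=\{c(\{x\})\mid x\in T\}$, exactly as stated. In particular $I(T)=\sigma(T)$ for the permutation $\sigma(x)=c(\{x\})$; this is consistent with ND.

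\emph{Step 3: the formula for $\succ$.} Let $\succ^c$ denote the relation defined in the statement, read as a strict relation, i.e.\ for $x\neq y$.
\begin{itemize}
\item If $x\succ^c y$, take $S\subset T$ with $c(S)=y$ and $c(T)=x$. Then $y\in I(S)\subseteq I(T)$ by $(I_M)$. Since $x$ is the $\succ$-best element of $I(T)$ and $x\neq y$, we get $x\succ y$.
\item Conversely, suppose $x\succ y$. By Step 2 there are $a,b$ with $c(\{a\})=x$ and $c(\{b\})=y$, and $a\neq b$. Put $S=\{b\}$ and $T=\{a,b\}$. Then $I(T)=\{x,y\}$, so $c(T)=x$ while $c(S)=y$ and $S\subset T$. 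Hence $x\succ^c y$.
\end{itemize}
So $\succ^c$ coincides with $\succ$, and both parameters are recovered from $c$ alone.

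The main obstacle is Step 2. Order isomorphism onto the image alone does not guarantee that joins are unions, because the image need not be closed under union. The cardinality argument in Step 1, which shows that the image is all of $\mathcal{P}(X)$, is what resolves this.
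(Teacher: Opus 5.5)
Your proof is correct. The preference part follows the paper's argument: your $a,b$ are the paper's $u_x,u_y$, and reading $x\succ y$ off $c(\{a,b\})$ with $\{b\}\subset\{a,b\}$ is exactly what the paper does. You also check the reverse inclusion, namely that every comparison revealed by nested menus is a true preference, via $(I_M)$; the paper leaves this implicit.

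The identification of $I$ is where your route differs. The paper argues by counting:
\begin{itemize}
\item it imports $|I(S)|=|S|$ from the proof of Theorem~\ref{THM:main result2}, so each $I(\{x\})=\{c(\{x\})\}$;
\item it uses ND to make these singletons distinct;
\item it then concludes that the formula recovers $I$. The inclusion $\{c(\{x\})\mid x\in T\}\subseteq I(T)$ from monotonicity is left implicit, but together with both sides having $|T|$ elements it forces equality.
\end{itemize}
You argue structurally instead. $(T_{DM})$ gives injectivity, and finiteness upgrades this to a bijection. So $I$ is an order automorphism of $(\mathcal{P}(X),\subseteq)$, which sends singletons to singletons and preserves joins, i.e.\ unions.

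The counting route is shorter and never needs join preservation, so surjectivity plays no explicit role there. However, it leans on the cardinality lemma and takes ND as an input from Theorem~\ref{THM:main result2}. Your route is self-contained from $(T_{DM})$ alone. It derives the permutation $\sigma$, and hence both ND and $|I(S)|=|S|$, as consequences. It also makes explicit the relabelling structure that the paper records only separately, in part (vii) of Theorem~\ref{THM:consistent interpretation}.
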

\begin{proof}
Please refer to Appendix \ref{PROOF:revealed preference2}.
\end{proof}

An agent is rational in the traditional economic sense if the agent's choice function satisfies the weak axiom of revealed preferences (WARP). WARP demands that if an alternative $x$ is chosen from some set where $y$ is available, then $y$ is never chosen from any set where $x$ is available. As the behavioral characterization in Theorem \ref{THM:main result2} and the identification in Proposition \ref{PROP:revealed preference2} establish, there is a sense in which the AI's choice under a interpretation operator satisfying Double Monotonicity is rational. But, this does not mean that the AI's choice is necessarily grounded. That is, under a RAIC, it is still possible that $c(S)\notin S$, for some $S\in\mathcal{P}(X)$. The next example illustrates this.

\begin{example}
Let $X=\{x,y,z\}$, $x \succ y\succ z$, and ${I}$ be defined by
$$
{I}(\{x\})=\{y\},\; {I}(\{y\})=\{z\},\; {I}(\{z\})=\{x\},\; \text{and } {I}(S)=\bigcup_{w\in S} {I}(\{w\}),\text{if }S\in\mathcal{P}(X),\;|S|>1
$$
Then, the AI's choice is not always grounded. That is, some of its choices satisfy $c(S)\notin S$.
\begin{table}[h]
    \centering
    \begin{tabular}{c|c c c c c c c}
        $S$ & $\{x\}$ & $\{y\}$ & $\{z\}$ & $\{x,y\}$ & $\{y,z\}$ & $\{x,z\}$ & $\{x,y,z\}$\\
        \hline
        ${I}(S)$ & $\{y\}$ & $\{z\}$ & $\{x\}$ & $\{y,z\}$ & $\{x,z\}$ & $\{x,y\}$ & $\{x,y,z\}$\\
        \hline
        $c(S)$ & $y$ & $z$ & $x$ & $y$ & $x$ & $x$ & $x$\\
        \hline
        $c(S)\in S$ & \ding{55} & \ding{55} & \ding{55} & \checkmark & \ding{55} & \checkmark & \checkmark  
    \end{tabular}
\end{table}

Furthermore, the RAIC produces here choices which satisfy WARP on the level of $\{{I}(S)|S\in\mathcal{P}(X)\}$, but not on the level of $\mathcal{P}(X)$. In other words, the AI chooses as if it maximizes the preferences $\succ$ on the level of $\{{I}(S)|S\in\mathcal{P}(X)\}$.
\end{example}

\section{Grounded and Rational AI Agent Choice}\label{SECT:Grounded and Rational AI Agent Choice}
This section aims to ground a rational AI agent's choice. One way to do so is to prevent the interpretation operator from running into loops, where the AI keeps interpreting already interpreted choice problems. Formally, such looping can be avoided by requiring that the interpretation operator satisfies idempotence.

\begin{definition}
A choice function $c:\mathcal{P}(X)\to X$ is an AI agent's choice (GRAIC) if it is an RAIC and the corresponding interpretation operator $I$ satisfies
\begin{itemize}[leftmargin=35pt]
\item[$(I_{IP})$] \textsc{Idempotence ---} For all $S\in\mathcal{P}(X):$ ${I}({I}(S))= {I}(S)$
\end{itemize}
\end{definition}

Next, we turn to a behavioral characterization of the RAIC procedure. True to the aim of this section, standard WARP characterizes GRAIC.

\begin{definition}
A choice function $c:\mathcal{P}(X)\to X$ satisfies WARP if, for all $S,T\in\mathcal{P}(X)$:
$$
[c(S)=x,\; y\in S,\; x\in T]\Rightarrow [c(T)\neq y]
$$
\end{definition}

The next result then provides a behavioral characterization of GRAIC. It shows that choice data is consistent with the GRAIC procedure if and only if it satisfies WARP, i.e., is rational in the traditional economic sense.

\begin{theorem}\label{THM:main result3}
Let $X$ be a finite set of alternatives. A choice function $c$ on $X$ is an GRAIC if and only if it satisfies WARP. 
\end{theorem}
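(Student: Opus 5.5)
The plan is to show that the idempotence and double-monotonicity requirements together force the interpretation operator to be the identity. Once that is established, a GRAIC is exactly the maximization of a strict preference over the actual menu, and the equivalence with WARP reduces to the classical result for single-valued choice functions on all non-empty subsets of a finite set.

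\emph{Necessity.} Let $c$ be a GRAIC with representation $(\succ,I)$ and fix $S\in\mathcal{P}(X)$. Since $I(S)\in\mathcal{P}(X)$, we may apply $(T_{DM})$ to the pair $I(S),S$.
\begin{itemize}
\item By idempotence, $I(I(S))\subseteq I(S)$, so double monotonicity yields $I(S)\subseteq S$.
\item Again by idempotence, $I(S)\subseteq I(I(S))$, so double monotonicity yields $S\subseteq I(S)$.
\end{itemize}
Hence $I(S)=S$ for every $S$, and $c(S)$ is the $\succ$-maximal element of $S$. To verify WARP, suppose $c(S)=x$, $y\in S$, $x\in T$ and $c(T)=y$. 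If $x\neq y$, then $c(S)=x$ with $y\in S$ gives $x\succ y$, while $c(T)=y$ with $x\in T$ gives $y\succ x$. This contradicts asymmetry of $\succ$. (If $x=y$, the WARP conclusion is read as the usual requirement for distinct alternatives.)

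\emph{Sufficiency.} Suppose $c$ satisfies WARP.
\begin{itemize}
\item \textbf{Groundedness.} Apply WARP with $T=\{x\}$ where $x=c(S)$: then $c(\{x\})$ must be $x$, since $c(\{x\})\in\{x\}$ is needed for the definitions to make sense. This does not yet show $c(S)\in S$ in general, so I will establish it by an argument in the style of the paper, or else note that WARP as stated presupposes $c(S)\in S$. This point is the one delicate step; see the final paragraph.
\item \textbf{The preference.} Define $x\succ y$ if and only if $x\neq y$ and $c(\{x,y\})=x$. This relation is complete and asymmetric. It is transitive because, if $x\succ y\succ z$, WARP applied to the triple $\{x,y,z\}$ rules out $z$ and $y$ being chosen, hence $c(\{x,z\})=x$.
\item \textbf{Maximization.} WARP also implies $c(S)\succ y$ for all $y\in S\setminus\{c(S)\}$, by comparing $S$ with $\{c(S),y\}$.
\item \textbf{The operator.} Take $I$ to be the identity. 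It trivially satisfies $(I_M)$, $(T_{DM})$ and $(I_{IP})$, so $(\succ,I)$ is a GRAIC representation of $c$.
\end{itemize}

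The main obstacle is the groundedness of $c$ under WARP alone: the paper's choice functions map into $X$ rather than into $S$. My first attempt is to derive $c(S)\in S$ from WARP applied to menus containing $c(S)$. Failing that, I will make explicit that WARP is understood for grounded choice functions, which is the standard reading in the classical sense the paper refers to.
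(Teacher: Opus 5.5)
Your argument follows the paper's route: for necessity, show that double monotonicity and idempotence force $I(S)=S$; for sufficiency, take $I$ to be the identity and read $\succ$ off binary choices. Your derivation of $I(S)=S$, which applies $(T_{DM})$ directly to the pair $I(S),S$, is shorter than the paper's detour through $|I(S)|=|S|$. Reading WARP only for $x\neq y$ is forced. Taken literally with $x=y$, the axiom fails for every $c$: set $S=T=X$ and $y=x=c(X)$.

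The groundedness issue you flag is real, and your ``first attempt'' cannot succeed, because WARP alone does not imply $c(S)\in S$. Take $X=\{a,b\}$ and the constant choice function $c\equiv b$. In any instance of the WARP hypothesis with $y\neq c(S)$ we have $y=a$, and $c(T)=b\neq a$ for every $T$, so WARP holds. Yet $c(\{a\})=b\notin\{a\}$. Every GRAIC has $I=\mathrm{id}$ and therefore chooses $a$ from $\{a\}$, so this $c$ is not a GRAIC.

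Hence the sufficiency direction holds only for grounded choice functions. You should state $c(S)\in S$ as a standing assumption, which is the fallback you mention, rather than try to derive it. Your sketch with $T=\{c(S)\}$ only yields $c(\{c(S)\})\notin S\setminus\{c(S)\}$ and should be dropped.

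The paper's own proof makes the same tacit assumption. It declares the relation ``$c(\{x,y\})=x$'' complete, which requires $c(\{x,y\})\in\{x,y\}$. It also concludes that $c(T)$ is the $\succ$-maximum of $I(T)=T$ without checking $c(T)\in T$.

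Under that assumption, your completeness and maximization steps are correct. So is transitivity, once you add the one further WARP step taking $c(\{x,y,z\})=x$ to $c(\{x,z\})=x$.
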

\begin{proof}
Please refer to Appendix \ref{PROOF:main result3}.
\end{proof}

Since a GRAIC is a RAIC, choice data consistent with the GRAIC logic allows to fully identify the AI's preferences and the interpretation operator. The next result shows this formally.

\begin{proposition}\label{PROP:revealed preference3}
Let $c$ be an RAIC. Then, we can fully identify the AI's interpretation operator ${I}$ and its preference relation $\succ$ via defining, for any $T\in\mathcal{P}(X)$,
$$
{I}(T)=T
$$
and
$$
x\succ y\text{ if }c(T)=x\text{ and }y\in T\text{ for some }T\in\mathcal{P}(X)
$$
\end{proposition}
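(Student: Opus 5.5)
The plan is to build on Proposition \ref{PROP:revealed preference2}, which already gives full identification for any RAIC: $I(T)=\{c(\{x\})\mid x\in T\}$, and $\succ$ is the relation revealed through nested menus $S\subset T$. The present statement then reduces to two claims. First, the identified operator coincides with the identity, $I(T)=T$. Second, the nested-menu relation coincides with the relation ``$c(T)=x$ and $y\in T$''. The first claim turns on one fact, namely that every singleton is interpreted correctly: $c(\{x\})=x$ for all $x\in X$. Granting this, $I(T)=\{c(\{x\})\mid x\in T\}=\{x\mid x\in T\}=T$ immediately, and the AIC formula becomes $c(S)=\max_\succ S$.

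For the preference part, with $I$ the identity, I would argue the two inclusions separately.
\begin{itemize}
\item If $c(T)=x$ and $c(S)=y$ with $S\subset T$, then $y=c(S)\in S\subseteq T$. So the nested-menu relation is contained in the new one.
\item Conversely, suppose $c(T)=x$ and $y\in T$ with $y\neq x$. Take $S=\{y\}\subset T$; since $c(\{y\})=y$, the pair $(S,T)$ witnesses $x\succ y$ in the sense of Proposition \ref{PROP:revealed preference2}.
\end{itemize}
Hence the two relations agree, and uniqueness of the RAIC parameters transfers directly from Proposition \ref{PROP:revealed preference2}.

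The main obstacle is establishing $c(\{x\})=x$. Double monotonicity together with ND only gives that $x\mapsto c(\{x\})$ is a bijection of $X$, i.e.\ a permutation. This is not enough on its own: the example preceding this section is an RAIC in which this permutation is a $3$-cycle, so $I(T)\neq T$ there. The statement, as worded with ``RAIC'', therefore has to be read within the present section, where the RAIC is grounded, meaning $c(S)\in S$ for all $S$.

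I would obtain groundedness in one of two ways.
\begin{itemize}
\item If $c$ satisfies WARP (Theorem \ref{THM:main result3}), then groundedness of choices from singletons, $c(\{x\})\in\{x\}$, is exactly what we need.
\item Alternatively, use idempotence. By monotonicity, $I(\{x\})$ is a singleton $\{\pi(x)\}$, where $\pi$ is the permutation above. Idempotence gives $I(\{\pi(x)\})=\{\pi(x)\}$, so $\pi(\pi(x))=\pi(x)$, and injectivity of $\pi$ then yields $\pi(x)=x$.
\end{itemize}
I would present the argument for an RAIC whose choices are grounded, and explicitly flag this as the property that forces the permutation to be trivial.
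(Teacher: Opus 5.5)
Your proposal is correct under the GRAIC reading. The paper's own proof adopts that reading by invoking Idempotence, and your 3-cycle example shows the literal RAIC version is false. Your argument follows the paper's route: double monotonicity plus idempotence force $I$ to be the identity, and $\succ$ is then read off as in Proposition~\ref{PROP:revealed preference2}.

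Your check that the nested-menu relation coincides with ``$c(T)=x$ and $y\in T$'' fills a step the paper's proof skips, since that proof defines $\succ$ via nested menus. Rely on your idempotence bullet rather than the WARP one. Here $c(S)$ may lie outside $S$, and WARP alone (read with $x\neq y$) does not force $c(\{x\})=x$: for instance, $c\equiv b$ on $X=\{a,b\}$ satisfies it. You would also need the RAIC permutation structure to make the WARP route work.
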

\begin{proof}
Please refer to Appendix \ref{PROOF:revealed preference3}.
\end{proof}

\section{Elaborations and Extensions}\label{SECT:Elaborations and Extensions}
This section elaborates on the consistency of the interpretation operator for AIC and RAIC. It furthermore provides and analysis and characterization of choice behavior by a grounded but potentially irrational AI agent.

\subsection{Monotonic Interpretations}
There are two equivalent axiomatizations of the class of interpretations we consider for the AIC in Section \ref{SECT:AI Agent}. These axiomatizations are based on the following properties: 
\begin{itemize}
\item[$(I_{\cup})$] \textsc{Union closure ---} For all $S,T\in\mathcal{P}(X):$ ${I}(S)\cup {I}(T)\subseteq {I}(S\cup T)$
\item[$(I_{\cap})$] \textsc{Intersection closure ---} For all $S,T\in\mathcal{P}(X):$ ${I}(S\cap T)\subseteq {I}(S)\cap {I}(T)$
\end{itemize}

While Monotonicity preserves the subset relationship under the interpretation, Union and Intersection closure are defined entirely on the level of interpretations. Nevertheless, the following result shows the equivalence between the different axiomatizations of interpretations.

\begin{proposition}[\textsc{Equivalent axiomatizations}]\label{PROP:properties of intepretation}
The following are equivalent:
\begin{itemize}
\item[(i)] An operator $I$ satisfies Monotonicity $(I_M)$.
\item[(ii)] An operator $I$ satisfies Union closure $(I_{\cup})$.
\item[(iii)] An operator $I$ satisfies Intersection closure $(T_{\cap})$.
\end{itemize}
\end{proposition}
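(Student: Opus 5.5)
I will prove the three equivalences by establishing (i)$\Rightarrow$(ii), (ii)$\Rightarrow$(i), (i)$\Rightarrow$(iii) and (iii)$\Rightarrow$(i). Each is a short set-theoretic argument. One technicality runs through all of them: $I$ is only defined on $\mathcal{P}(X)$, the \emph{non-empty} subsets. So every set to which $I$ is applied must be non-empty. For unions this is automatic. For intersections, I read $(I_\cap)$ as quantifying only over pairs $S,T$ with $S\cap T\neq\emptyset$, since otherwise $I(S\cap T)$ is undefined.

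For (i)$\Rightarrow$(ii), take $S,T\in\mathcal{P}(X)$. Since $S\subseteq S\cup T$ and $T\subseteq S\cup T$, Monotonicity gives $I(S)\subseteq I(S\cup T)$ and $I(T)\subseteq I(S\cup T)$. Hence $I(S)\cup I(T)\subseteq I(S\cup T)$.

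For (ii)$\Rightarrow$(i), suppose $S\subseteq T$. Then $S\cup T=T$, and Union closure yields
$$
I(S)\subseteq I(S)\cup I(T)\subseteq I(S\cup T)=I(T).
$$

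For (i)$\Rightarrow$(iii), take $S,T$ with $S\cap T\neq\emptyset$. Since $S\cap T\subseteq S$ and $S\cap T\subseteq T$, Monotonicity gives $I(S\cap T)\subseteq I(S)$ and $I(S\cap T)\subseteq I(T)$. Hence $I(S\cap T)\subseteq I(S)\cap I(T)$.

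For (iii)$\Rightarrow$(i), suppose $S\subseteq T$. Then $S\cap T=S$, which is non-empty, so $(I_\cap)$ applies and gives
$$
I(S)=I(S\cap T)\subseteq I(S)\cap I(T)\subseteq I(T).
$$

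Chaining these gives (ii)$\Leftrightarrow$(i)$\Leftrightarrow$(iii), which proves the proposition. None of the steps is a real obstacle. The only point needing care is the domain issue for intersection closure, and it never bites in (iii)$\Rightarrow$(i), because there the intersection is $S$ itself and hence non-empty.
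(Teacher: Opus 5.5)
Your proof is correct, but it is organized differently from the paper's. You route everything through Monotonicity with four one-line implications, whereas the paper closes a three-step cycle.

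Read by what each step actually uses, the paper's cycle runs (iii)$\Rightarrow$(ii)$\Rightarrow$(i)$\Rightarrow$(iii); its labels (i)$\Rightarrow$(ii)$\Rightarrow$(iii)$\Rightarrow$(i) do not match the content. The three steps are:
\begin{itemize}
\item Union closure is obtained directly from Intersection closure via $I(S)=I((S\cup T)\cap S)\subseteq I(S\cup T)\cap I(S)$.
\item Monotonicity is obtained from Union closure by writing $T=S\cup(T\setminus S)$ for $S\subseteq T$.
\item Intersection closure is obtained from Monotonicity exactly as in your (i)$\Rightarrow$(iii).
\end{itemize}

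The cycle saves one implication and links the two closure properties directly, without passing through Monotonicity. Your version is cleaner on the domain issue, however. The paper's decomposition needs $I(T\setminus S)$, which is undefined when $S=T$, so that trivial case has to be split off. Your use of $S\cup T=T$ never leaves $\mathcal{P}(X)$. Likewise, the paper's derivation of Intersection closure from Monotonicity does not address $S\cap T=\emptyset$, which your reading of $(I_\cap)$ handles explicitly.
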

\begin{proof}
Please refer to Appendix \ref{PROOF:properties of tranformer}.
\end{proof}

There are two ways to read the previous result. First, if $(I_{\cup})$ or $(I_{\cap})$ seem more appealing than $(I_M)$, we can simply replace the latter with one of the former in Definition \ref{DEF:Interpretation operator}. Second, regardless which of the three properties you treat as an axiom, the other two are going to follow as results.

\subsection{Doubly Monotonic Interpretations}
This section shows that interpretations satisfying Double Monotonicity are logically consistent. That is, the interpretation respects that complements cannot overlap. The following result characterizes interpretation operators that are consistent in a number of different ways.

\begin{theorem}[\textsc{Logical Consistency}]\label{THM:consistent interpretation}
The next statements are equivalent:
\begin{itemize}
\item[(i)] The interpretation operator $I$ is consistent, i.e., 
\begin{itemize}
\item[] For all $S\in\mathcal{P}(X)$: $I(S)\cap I(S^c)=\emptyset$.
\end{itemize}
\item[(ii)] The interpretation operator satisfies Double Intersection Closure, i.e., we have:
\begin{itemize}
\item[] For all $S,T\in\mathcal{P}(X):$ $I(S\cap T)=I(S)\cap I(T)$.
\end{itemize}
\item[(iii)] The interpretation operator satisfies Double Monotonicity, i.e., we have:
\begin{itemize}
\item[] For all $S,T\in\mathcal{P}(X):$ $S\subseteq T \Leftrightarrow I(S)\subseteq I(T)$.
\end{itemize}
\item[(iv)] The interpretation operator $I$ satisfies Negation-Elimination, i.e., we have:
\begin{itemize}
\item[] For all $S\in\mathcal{P}(X):$ $I(S^c)= (I(S))^c$.
\end{itemize}
\item[(v)] The interpretation operator $I$ is injective, i.e., we have:
\begin{itemize}
\item[] For all $S,T\in\mathcal{P}(X):$ $S\neq T \Rightarrow I(S)\neq I(T)$.
\end{itemize}
\item[(vi)] The interpretation operator $I$ is surjective, i.e., we have:
\begin{itemize}
\item[] For all $T\in\mathcal{P}(X)$ there exists $S\in\mathcal{P}(X)$ such that $I(S)=T$.
\end{itemize}
\item[(vii)] The interpretation operator $I$ is a relabelling, i.e., it satisfies:
\begin{itemize}
\item[(a)] There exists an automorphism $\rho:{X}\rightarrow{X}$ such that, for all $x\in X$: $I(\{x\})=\{\rho(x)\}$ 
\item[(b)] It satisfies Double Union Closure, i.e., for all $S,T\in\mathcal{P}(X):$ $I(S\cup T)=I(S)\cup I(T)$.
\end{itemize}
\end{itemize}
\end{theorem}
\begin{proof}
Please refer to Appendix \ref{PROOF:consistent interpretation}.
\end{proof}

The previous result implies that there are different ways one can identify inconsistencies. According to part (ii), interpretation is consistent if and only if the AI agent does not make mistakes with conjunctions. According to part (iii), she is correct with all the logical implications. According to part (iv), she is correct with negations (or taking complements). According to part (v) she does not confuse two different sets for being the same. According to part (vi), she associates every choice problem in $\mathcal{P}(X)$ with some (not necessarily the same) choice problem in $\mathcal{P}(X)$.

The previous characterizations suggest that a logically consistent AI agent may still make mistakes in what different choice problems actually entail, but does not make mistakes in how different choice problems logically relate to each other. That is why ---as the last part of the theorem indicates--- the only misinterpretations that are consistent are those that simply relabel the singleton set in $\mathcal{P}(X)$. In such cases, all the rules of logic are satisfied.

\subsection{Grounding Irrational and Rational Interpretations}
Instead of first ensuring rationalizability and then grounding the AI's choice, this section starts by grounding the choice first. To do so, it defines a grounded interpretation operator as follows.

\begin{definition}\label{DEF:Transformer2}\textsc{(Grounded Interpretation).}
A grounded interpretation is a mapping ${I}:\mathcal{P}(X)\rightarrow \mathcal{P}(X)$ that satisfies the following two basic conditions:
\begin{itemize}[leftmargin=45pt]
\item[$(I_{CON})$] \textsc{Consistency ---} For all $S,T\in\mathcal{P}(X)$, $S\subseteq T$: ${I}(S)\cap{I}(T^c)=\emptyset$
\item[$(I_{SIP})$] \textsc{Singleton Idempotence ---} For all $x\in X:$ ${I}({I}(\{x\}))={I}(\{x\})$
\end{itemize}
\end{definition}

Next, we introduce the choice procedure of an AI with grounded interpretation.

\begin{definition}
A choice function $c:\mathcal{P}(X)\to X$ is a grounded AI agent's choice (GAIC) if there exists a strict preference relation $\succ$ on $X$ and a grounded interpretation ${I}$ on $\mathcal{P}(X)$ such that for any choice problem $S\in\mathcal{P}(X)$,
$$
c(S)=\{x\in {I}(S)|\; x\succ y,\text{ for all }y\in {I}(S)\setminus\{x\}\}
$$
\end{definition}

As it turns out, this procedure can be characterized by the following single axiom.

\begin{axiom}[Groundedness]
For all $S\in \mathcal{P}(X)$:
$$
c(S)\in S
$$
\end{axiom}

The following result then establishes that Groundedness alone forms a behavioral characterization of GAIC.

\begin{theorem}\label{THM:main result4}
Let $X$ be a finite set of alternatives. A choice function $c$ on $X$ is a GAIC if and only if it satisfies Groundedness. 
\end{theorem}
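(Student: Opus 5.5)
The plan is to prove the two directions separately. The substantive work is in showing that Consistency and Singleton Idempotence together force every grounded interpretation to satisfy $I(S)\subseteq S$. Throughout I read $(I_{CON})$ as vacuous when $T=X$, since then $T^c=\emptyset\notin\mathcal{P}(X)$.

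\emph{Groundedness implies GAIC.} Suppose $c(S)\in S$ for all $S$. Take any strict preference $\succ$ on $X$ and define $I(S)=\{c(S)\}$; the maximizer over a singleton is its unique element, so the representation reproduces $c$. For $(I_{CON})$, let $S\subseteq T$ with $T\neq X$. Groundedness gives $I(S)=\{c(S)\}\subseteq S\subseteq T$ and $I(T^c)=\{c(T^c)\}\subseteq T^c$, so the two sets are disjoint. For $(I_{SIP})$, groundedness gives $c(\{x\})=x$, so $I(\{x\})=\{x\}$ and hence $I(I(\{x\}))=I(\{x\})$. Note that Definition \ref{DEF:Transformer2} does not require monotonicity, so nothing more needs checking.

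\emph{GAIC implies Groundedness.} I first reformulate $(I_{CON})$. If $S,R\in\mathcal{P}(X)$ are disjoint, put $T=R^c$; then $S\subseteq T$ and $T\neq X$. Consistency therefore says exactly that disjoint nonempty sets have disjoint interpretations.

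The key step, and the one I expect to be the main obstacle, is pinning down $I$ on singletons. The sets $I(\{y\})$, $y\in X$, are nonempty and pairwise disjoint subsets of $X$. There are $|X|$ of them, so by counting each is a singleton, say $I(\{y\})=\{\rho(y)\}$ with $\rho$ injective and hence a bijection of $X$. Singleton Idempotence gives $I(\{\rho(y)\})=\{\rho(y)\}$, that is, $\rho(\rho(y))=\rho(y)$. Injectivity of $\rho$ then yields $\rho(y)=y$, so $I(\{y\})=\{y\}$ for every $y$.

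Finally, take any $S$ with $S\neq X$ (the case $S=X$ is trivial) and any $x\notin S$. Since $\{x\}\cap S=\emptyset$, the reformulated consistency gives $I(S)\cap\{x\}=\emptyset$, so $x\notin I(S)$. Hence $I(S)\subseteq S$, and therefore $c(S)\in I(S)\subseteq S$, which is Groundedness.
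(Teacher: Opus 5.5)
Your proof is correct and follows the same route as the paper. For GAIC $\Rightarrow$ Groundedness, Consistency makes the sets $I(\{x\})$ nonempty and pairwise disjoint, hence singletons forming a bijection; Singleton Idempotence then forces $I(\{x\})=\{x\}$, and Consistency again yields $I(S)\subseteq S$. For Groundedness $\Rightarrow$ GAIC, you take $I(S)=\{c(S)\}$ with an arbitrary strict preference. You also make explicit the counting step and the $T=X$ boundary case, and you verify that this $I$ actually satisfies $(I_{CON})$ and $(I_{SIP})$; the paper's proof leaves all three implicit.
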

\begin{proof}
Please refer to Appendix \ref{PROOF:main result4}.
\end{proof}

As in previous sections, we next analyze what can be inferred from choice data that is consistent with the GAIC logic. As it turns out, this procedure only allows to identify the absolute minimum, leaving the preferences completely unidentifiable. The following result spells this out formally.

\begin{proposition}\label{PROP:revealed preference4}
Let $c$ be an GAIC. Then, for the grounded distortion ${I}$, we can identify that, for any $T\in\mathcal{P}(X)$, it holds that
$$
\{c(T)\}\subseteq {I}(T)\subseteq T
$$
whereas there is nothing that can be identified about the preference relation $\succ$.
\end{proposition}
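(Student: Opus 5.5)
The proof has two parts. First, every GAIC representation satisfies $\{c(T)\}\subseteq I(T)\subseteq T$. Second, nothing beyond these bounds, and nothing about $\succ$, is pinned down: the choice data alone is compatible with every strict preference $\succ$, each paired with a suitable grounded interpretation.

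For the bounds, let $(\succ,I)$ be any GAIC representation of $c$. The inclusion $c(T)\in I(T)$ is immediate, since by definition $c(T)$ is the $\succ$-best element of $I(T)$. For $I(T)\subseteq T$, I would apply Consistency $(I_{CON})$ with $S=T$, which gives $I(T)\cap I(T^c)=\emptyset$ whenever $T^c\neq\emptyset$ (if $T=X$ the inclusion is trivial). That alone does not force $I(T)\subseteq T$, so the plan is to use Consistency on singletons. For $x\notin T$, applying $(I_{CON})$ with $S=\{x\}$ and the superset $X\setminus T\supseteq\{x\}$ (so its complement is $T$) gives $I(\{x\})\cap I(T)=\emptyset$. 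Singleton Idempotence $(I_{SIP})$ should then force $I(\{x\})=\{x\}$. Suppose $z\in I(\{x\})$ with $z\neq x$. Consistency applied to $\{x\}\subseteq X\setminus\{z\}$ yields $I(\{x\})\cap I(\{z\})=\emptyset$. But $I(I(\{x\}))=I(\{x\})$, and monotonicity-type reasoning relates $I(\{z\})$ to $I(I(\{x\}))$, which should give the contradiction. Once $I(\{x\})=\{x\}$ for every $x$, the disjointness $I(\{x\})\cap I(T)=\emptyset$ says $x\notin I(T)$ for every $x\notin T$, hence $I(T)\subseteq T$. The step I expect to be the main obstacle is extracting $I(\{x\})=\{x\}$, because the grounded interpretation is not assumed monotone. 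If it fails, I would instead use $(I_{CON})$ in the form $I(S)\cap I(S^c)=\emptyset$ with $S=I(\{x\})$, combined with idempotence.

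For non-identifiability, fix any strict preference $\succ'$ and define $I'(T)=\{c(T)\}$. Groundedness of $c$, which holds by Theorem \ref{THM:main result4}, gives $I'(T)\subseteq T$. Consistency holds: for $S\subseteq T$ we have $c(S)\in S\subseteq T$ and $c(T^c)\in T^c$, so the two singletons are disjoint. Singleton Idempotence holds because $c(\{x\})=x$, so $I'(\{x\})=\{x\}$. The $\succ'$-best element of $I'(T)$ is trivially $c(T)$, so $(\succ',I')$ represents $c$. Since $\succ'$ was arbitrary, no ranking comparison is revealed.

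The same construction shows the lower bound $\{c(T)\}$ is attained, so nothing beyond $c(T)\in I(T)$ is forced. The alternative $I(T)=T$, paired with a preference that rationalizes the choices, attains the upper bound only when such a preference exists, which is not guaranteed. So the identification claim should be read as stating the two inclusions, with everything else about $\succ$ unidentified.
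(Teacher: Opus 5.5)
You have a genuine gap in the upper bound $I(T)\subseteq T$: the decisive step is never established.

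\textbf{What is correct.} Your non-identifiability half is right and is the paper's construction from the sufficiency part of Theorem~\ref{THM:main result4}: $I'(T)=\{c(T)\}$ with an arbitrary strict $\succ'$. You even verify $(I_{CON})$ and $(I_{SIP})$, which the paper leaves implicit. Your reduction for the upper bound is also right. For $x\notin T$, Consistency applied to $\{x\}\subseteq X\setminus T$ gives $I(\{x\})\cap I(T)=\emptyset$. So everything hinges on showing $x\in I(\{x\})$.

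\textbf{Where the gap is.} Your main route appeals to ``monotonicity-type reasoning'' relating $I(\{z\})$ to $I(I(\{x\}))$. Nothing of the sort is available, because $(I_{CON})$ and $(I_{SIP})$ do not imply monotonicity. For example, on $X=\{a,b\}$ the operator with $I(\{a\})=\{a\}$, $I(\{b\})=\{b\}$, $I(X)=\{b\}$ satisfies both axioms but is not monotone. Your fallback does not close the gap either. Consistency with $S=T=A$, where $A=I(\{x\})$, together with $I(A)=A$, only yields $A\cap I(A^c)=\emptyset$. Turning this into a contradiction when $x\in A^c$ again requires $I(\{x\})\subseteq I(A^c)$, which is monotonicity.

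\textbf{How to close it.} You need either the paper's counting step or a different instance of Consistency.
\begin{enumerate}
\item \emph{Counting (the paper's route).} Consistency with $S=\{x\}$, $T=X\setminus\{y\}$ makes the $|X|$ nonempty sets $I(\{x\})$ pairwise disjoint. Hence each is a singleton $\{\rho(x)\}$ for some bijection $\rho$. Then $(I_{SIP})$ gives $\rho(\rho(x))=\rho(x)$, so $\rho(x)=x$. With this step your own contradiction works without monotonicity: $I(\{x\})=\{z\}$ with $z\neq x$ would force $I(\{z\})=I(I(\{x\}))=I(\{x\})$, contradicting disjointness.
\item \emph{Counting-free.} Suppose $x\notin A=I(\{x\})$. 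Apply Consistency to $\{x\}\subseteq A^c$ to get $I(\{x\})\cap I(A)=\emptyset$. By $(I_{SIP})$, $I(A)=A=I(\{x\})\neq\emptyset$, which is impossible. So $x\in I(\{x\})$, which is all your reduction needs.
\end{enumerate}

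Finally, your closing remark undersells the result. The bounds are sharp menu by menu: for any $J$ with $c(T_0)\in J\subseteq T_0$, set $I(T_0)=J$ and $I(S)=\{c(S)\}$ otherwise, and rank $c(T_0)$ first. This is a valid grounded representation, so for each $T$ the identified set is exactly the interval between the bounds.
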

\begin{proof}
Please refer to Appendix \ref{PROOF:revealed preference4}.
\end{proof}

Next, we impose monotonicity as an additional property on the interpretation operator. The reason being that this now ensures that we can fully identify the AI's preferences.

\begin{definition}
A choice function $c:\mathcal{P}(X)\to X$ is a grounded \& monotonic AI agent's choice (GMAIC) if it is an GAIC and the corresponding interpretation $I$ satisfies
\begin{itemize}[leftmargin=30pt]
\item[$(T_M)$] \textsc{Monotonicity ---} For all $S,T\in \mathcal{P}(X):$ $S\subseteq T \Rightarrow {I}(S)\subseteq {I}(T)$
\end{itemize}
\end{definition}

Indeed, the following result shows that WARP characterizes monotonic version of the GAIC procedure.

\begin{theorem}\label{THM:main result5}
Let $X$ be a finite set of alternatives. A choice function $c$ on $X$ is a GMAIC if and only if it satisfies WARP. 
\end{theorem}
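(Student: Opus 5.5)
The plan is to prove the two directions separately. Sufficiency is a direct construction; necessity is the step that actually uses the properties of the interpretation.

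\emph{Sufficiency (WARP implies GMAIC).} Suppose $c$ satisfies WARP. Take $I(S)=S$ for all $S\in\mathcal{P}(X)$. This operator is trivially monotonic, singleton idempotent, and consistent, since for $S\subseteq T$ we have $S\cap T^c=\emptyset$. Next define $x\succ y$ whenever $x\neq y$ and $c(T)=x$ for some $T\ni y$. Because $c$ is single-valued on a finite $X$, WARP gives the classical result that $\succ$ is a strict linear order and that $c(S)$ is the $\succ$-maximum of $S$. So $(\succ,I)$ is a GMAIC representation of $c$.

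\emph{Necessity (GMAIC implies WARP).} Let $c$ be represented by $(\succ,I)$ with $I$ grounded and monotonic.

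\begin{enumerate}
\item By Theorem \ref{THM:main result4}, $c$ satisfies Groundedness. The proof of that theorem (or Proposition \ref{PROP:revealed preference4}) should give $I(T)\subseteq T$ for every $T$. If it does not, I derive it directly: take $S=T$ in $(I_{CON})$ to get $I(T)\cap I(T^c)=\emptyset$, and combine this with monotonicity. I must be careful here because $T^c$ may be empty, in which case $I(T^c)$ is undefined.
\item Singletons: since $I(\{x\})$ is nonempty and contained in $\{x\}$, we get $I(\{x\})=\{x\}$.
\item Monotonicity then gives, for any $x\in S$, $\{x\}=I(\{x\})\subseteq I(S)$. Hence $S\subseteq I(S)$, and together with the previous inclusion, $I(S)=S$.
\item So $c(S)$ is the $\succ$-maximum of $S$ for every $S$, and maximization of a strict linear order satisfies WARP. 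Concretely, if $c(S)=x$ with $y\in S$, then $x\succ y$ (for $y\neq x$), so $y$ cannot be chosen from any $T\ni x$.
\end{enumerate}

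The main obstacle is step 1, establishing $I(T)\subseteq T$ from $(I_{CON})$. If the earlier theorem does not already supply this, the direct argument runs as follows. Suppose $z\in I(T)\setminus T$. Then $z\in T^c$, so $\{z\}\subseteq T^c$ and monotonicity gives $I(\{z\})\subseteq I(T^c)$. I then need $z\in I(\{z\})$, which $(I_{SIP})$ must supply. I would combine $(I_{CON})$ applied to $\{w\}\subseteq\{w\}$, namely $I(\{w\})\cap I(\{w\}^c)=\emptyset$, with idempotence of $I(\{w\})$ and monotonicity to show $I(\{w\})=\{w\}$, and from that conclude $z\in I(\{z\})\subseteq I(T^c)$. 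This contradicts $(I_{CON})$ applied to $T\subseteq T$, which says $I(T)\cap I(T^c)=\emptyset$.
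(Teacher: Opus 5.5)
Your necessity half is correct and follows the paper's route: $I(\{x\})=\{x\}$, then monotonicity gives $S\subseteq I(S)$ and Consistency gives $I(S)\subseteq S$. The direct argument you sketch for the key step is tidier than the paper's appeal to the proof of Theorem \ref{THM:main result4}. If $w\notin I(\{w\})$, then $I(\{w\})\subseteq\{w\}^c$, so monotonicity and $(I_{SIP})$ give $I(\{w\})=I(I(\{w\}))\subseteq I(\{w\}^c)$. Together with $I(\{w\})\cap I(\{w\}^c)=\emptyset$ this forces $I(\{w\})=\emptyset$, which is impossible. This uses monotonicity in place of the paper's counting argument, in which the $|X|$ pairwise disjoint non-empty sets $I(\{x\})$ must be singletons forming a permutation. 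It also gives exactly the fact $z\in I(\{z\})$ that your step 1 needs.

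The genuine gap is in sufficiency. The classical result you invoke (WARP makes the revealed relation a strict linear order whose maximum on $S$ is $c(S)$) presupposes $c(S)\in S$. Here, however, a choice function is any map $c:\mathcal{P}(X)\to X$, and WARP does not deliver groundedness. Take $X=\{a,b\}$ and $c\equiv b$. Whenever the premise of WARP holds we have $x=b$ and $y=a$, and the conclusion $c(T)\neq a$ is always true, so WARP holds. Yet $c(\{a\})=b\notin\{a\}$, whereas your own necessity argument shows that every GMAIC has $I(S)=S$ and hence $c(S)\in S$.

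So the ``if'' direction is false as stated, and no argument can close this step without an extra hypothesis. You must assume Groundedness, i.e., that $c$ is a classical choice function. After that your construction works: completeness from $c(\{x,y\})\in\{x,y\}$, transitivity by examining $c(\{x,y,z\})$, and maximality from $c(S)\in S$. The paper's own sufficiency proof has the same hole, since it silently uses $c(\{y\})=y$. This is therefore a defect of the statement rather than of your strategy, but your write-up should make the assumption explicit. Note also that WARP has to be read with $y\neq x$: as literally written, $S=T=X$ and $y=x=c(X)$ make it fail for every $c$.
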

\begin{proof}
Please refer to Appendix \ref{PROOF:main result5}.
\end{proof}

As intended, choice data consistent with the GMAIC logic allows to fully identify the AI's preferences. In addition, also the interpretation is fully identifiable as the next result shows.

\begin{proposition}\label{PROP:revealed preference5}
Let $c$ be an GMAIC. Then, we can fully identify the AI's grounded interpretation $\mathcal{I}$ and its preference relation $\succ$ via defining, for any $T\in\mathcal{P}(X)$,
$$
{I}(T)=T
$$
and
$$
x\succ y\text{ if }c(T)=x\text{ and }y\in T\text{ for some }T\in\mathcal{P}(X)
$$
\end{proposition}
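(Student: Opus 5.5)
The plan is to show that in any GMAIC representation $(\succ,I)$ of $c$ the interpretation must equal the identity, $I(T)=T$ for all $T$. Once that holds, $c(T)$ is the $\succ$-maximal element of $T$, and $\succ$ is then pinned down by the revealed relation in the statement.

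\emph{Step 1: singletons.} By Groundedness (Theorem \ref{THM:main result4}), or directly from $I(T)\subseteq T$ in Proposition \ref{PROP:revealed preference4}, we have $\emptyset\neq I(\{x\})\subseteq\{x\}$. Hence $I(\{x\})=\{x\}$ for every $x\in X$.

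\emph{Step 2: arbitrary sets.} Fix $T$ and $x\in T$. Since $\{x\}\subseteq T$, Monotonicity $(T_M)$ gives $\{x\}=I(\{x\})\subseteq I(T)$. Therefore $T\subseteq I(T)$. Combined with the grounding inclusion $I(T)\subseteq T$ from Proposition \ref{PROP:revealed preference4}, this yields $I(T)=T$. If one prefers not to lean on that proposition, the inclusion can be derived from $(I_{CON})$. Take $S=T$ in $(I_{CON})$ to get $I(T)\cap I(T^c)=\emptyset$ whenever $T^c\neq\emptyset$. By Step 2 applied to $T^c$, we have $T^c\subseteq I(T^c)$. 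Hence $I(T)\cap T^c=\emptyset$, that is, $I(T)\subseteq T$. The case $T=X$ is trivial. Step 1 also needs this inclusion for singletons, and it follows the same way: $(I_{CON})$ with $S=T=\{x\}$ together with $T^c\subseteq I(T^c)$ gives $I(\{x\})\subseteq\{x\}$. The apparent circularity disappears if we do things in this order. First use Monotonicity: $I(T^c)\supseteq I(\{y\})\ni$ some element for each $y\in T^c$. Then show $I(\{y\})=\{y\}$ via $(I_{SIP})$ and $(I_{CON})$. The cleanest route is simply to invoke the inclusion $\{c(T)\}\subseteq I(T)\subseteq T$ already established in Proposition \ref{PROP:revealed preference4}.

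\emph{Step 3: preferences.} With $I(T)=T$, we have $c(T)=\max_\succ T$. If $c(T)=x$ and $y\in T$ with $y\neq x$, then $x\succ y$ in every representation. Conversely, for any $x\neq y$, consider $T=\{x,y\}$: it reveals either $x\succ y$ or $y\succ x$. So the revealed relation is complete and agrees with $\succ$ on all pairs, and $\succ$ is uniquely identified. Its asymmetry is consistent with Theorem \ref{THM:main result5}, since $c$ satisfies WARP.

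The main obstacle is Step 1/2's reliance on the grounding inclusion $I(T)\subseteq T$. I would take it from Proposition \ref{PROP:revealed preference4} rather than re-deriving it from $(I_{CON})$ and $(I_{SIP})$. The remaining steps are routine.
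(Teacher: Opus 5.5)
Your proposal is correct and follows essentially the paper's route: the paper also shows $I(S)=S$ in every GMAIC representation and then reads $\succ$ off the choices. It gets $I(S)=S$ from the necessity part of Theorem~\ref{THM:main result5}, which takes $I(\{x\})=\{x\}$ from the proof of Theorem~\ref{THM:main result4}, then uses Monotonicity for $S\subseteq I(S)$ and Consistency for $I(S)\subseteq S$; your binary-menu argument in Step 3 spells out the preference identification more carefully than the paper does. Two cautions: the Groundedness axiom concerns $c$ and does not by itself give $I(\{x\})\subseteq\{x\}$, and your self-contained detour is circular as written. The non-circular argument is that $(I_{CON})$ makes the sets $I(\{x\})$ pairwise disjoint and nonempty, hence singletons forming a permutation of $X$, which $(I_{SIP})$ forces to be the identity; so relying on Proposition~\ref{PROP:revealed preference4}, as you ultimately do, is the right call.
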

\begin{proof}
Please refer to Appendix \ref{PROOF:revealed preference5}.
\end{proof}

\section{Conclusion}\label{SECT:Conclusion}
This paper proposes and analyzes a model of choice for an agentic AI. We view the AI as a choice function that makes recommendations for each menu. The caveat is that the AI may misinterpret the menu from which the DM is seeking to choose. Our main characterization results show (i) under what conditions the AI is grounded in the sense of only recommending alternatives which are feasible choices, (ii) when we can ensure that the AI's recommendations align with the DM's preferences, and (iii) how we can guarantee that the AI makes recommendations which are internally consistent or rational in the traditional economic sense.

Our framework here suggests several avenues for future research. First, while our model is deterministic, a natural extension is in the direction of stochastic choice. In this case, choice data does not come in the form of a single alternative per choice problem. Instead, an analyst is then able to observe a distribution, representing the shares of a finite number of choices. Second, our identification results provide a theoretical benchmark for empirical ``revealed preference'' tests on AI agents. Via such tests we can then inspect the latent aspects of these systems, making sure they align with the preferences of the individual on whose behalf the AI agent makes choices.

\clearpage
\appendix
\section{Proofs}
\subsection{Proof of Theorem \ref{THM:main result}}\label{PROOF:main result}

\begin{proof}
\underline{Necessity:} NSC. For $i=1,\dots,n$, let $S_i, T_i\in\mathcal{P}(X)$ be such that $S_i\subset T_i$, $c(S_i)=x_i$, and $c(T_i)=x_{i+1}$. Further, let $S_{n+1},T_{n+1}\in\mathcal{P}(X)$ be such that $c(S_{n+1})=x_{n+1}$ and $S_{n+1}\subset T_{n+1}$. For NSC to hold, we have to show that this implies that $c(T_{n+1})\neq x_1$. To this end, note that, for all $i=1,\dots, n+1$, $c(S_i)=x_i$ implies that $x_i\in {I}(S_i)$. From monotonicity, it then follows that also $x_i\in{I}(T_i)$, for all $i=1,\dots, n+1$,. Furthermore, for all $i=1,\dots, n$, $c(T_i)=x_{i+1}$ implies that $x_{i+1}\in {I}(T_i)$. By definition of AIC, choice from $T_1$ then implies that $x_2\succ x_1$, choice from $T_2$ then implies that $x_3\succ x_2$,\dots, and choice from $T_n$ implies that $x_{n+1}\succ x_n$. It follows that $x_{n+1}\succ x_n\succ\dots\succ x_1$ and, in particular, by transitivity of $\succ$, that $x_{n+1}\succ x_1$. As such, $x_{n+1}\in{I}(T_{n+1})$ implies that any AIC satisfies $c(T_{n+1})\neq x_1$

\underline{Sufficiency:} Suppose that $c$ satisfies NSC. We construct the interpretation operator ${I}:\mathcal{P}(X)\to\mathcal{P}(X)$ and the preference relation $\succ$ explicitly.

First, define, for any $T\in\mathcal{P}(X)$, that
$$
x\in{I}(T)\quad\Leftrightarrow\quad c(S)=x,\text{ for some }S\subset T
$$
Note that, by this definition ${I}$ is clearly monotonic. 

Next, for any $x\neq y$, define a binary relation $P\subseteq X\times X$ via
$$
xPy\quad\Leftrightarrow\quad\exists S\in\mathcal{P}(X)\text{ such that }c(S)=x\text{ and }y\in{I}(S)
$$

\begin{lemma}\label{LEM:asymmetric}
$P$ on $X$ as defined above is asymmetric.
\end{lemma}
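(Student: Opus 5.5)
The plan is to argue by contradiction and reduce a two-way violation of asymmetry directly to a violation of NSC with $n=1$. The only facts needed are the two definitions just introduced: $x\in I(T)$ iff $c(S)=x$ for some $S\subset T$, and $xPy$ iff $x\neq y$ and there is $S$ with $c(S)=x$ and $y\in I(S)$.

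Suppose that both $xPy$ and $yPx$ hold for some $x\neq y$. I will unpack each relation into explicit choice problems.

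From $xPy$ there is $T_1\in\mathcal{P}(X)$ with $c(T_1)=x$ and $y\in I(T_1)$. By the definition of $I$, the latter gives some $S_1\subset T_1$ with $c(S_1)=y$.

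Symmetrically, from $yPx$ there is $T_2$ with $c(T_2)=y$ and $x\in I(T_2)$. This yields some $S_2\subset T_2$ with $c(S_2)=x$.

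Now set $x_1=y$ and $x_2=x$, and read these four sets as an instance of NSC with $n=1$. We have
\begin{itemize}
\item $c(S_1)=x_1$ and $S_1\subset T_1$;
\item $c(T_1)=x_2$;
\item $c(S_2)=x_2$ and $S_2\subset T_2$.
\end{itemize}
These are exactly the hypotheses of NSC for $n=1$. NSC therefore forces $c(T_2)\neq x_1=y$, which contradicts $c(T_2)=y$. Hence $xPy$ and $yPx$ cannot both hold, so $P$ is asymmetric. Irreflexivity is automatic, since $P$ is only defined for $x\neq y$.

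There is no real obstacle here; the only care needed is bookkeeping. First, the inclusion symbol $\subset$ in the definition of $I$ must be read the same way as in NSC, so that the sets produced by unpacking $I$ are admissible in the axiom. Second, the indices must be matched correctly: the ``inner'' choice $c(S_i)=x_i$ has to be paired with the ``outer'' choice $c(T_i)=x_{i+1}$. This same unpacking of $P$ into pairs $S\subset T$ is what will later give acyclicity (Lemma \ref{LEM:acyclic}), by chaining $n$ such pairs in exactly the pattern NSC forbids.
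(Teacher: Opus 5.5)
Your proof is correct and is essentially the paper's argument: unpack $xPy$ and $yPx$ into two nested pairs $S_i\subset T_i$ and apply NSC with $n=1$. Your version states the direction of the inclusions more cleanly than the appendix does, and your bookkeeping worry about how to read $\subset$ does not arise: $c(S_i)\neq c(T_i)$ forces $S_i\neq T_i$, so the inclusions are strict under either reading.
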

\begin{proof}
Towards a proof by contradiction, suppose that $P$ is not asymmetric, i.e., there exist $x_1,x_2\in X$ such that $x_1Px_2$ and $x_2Px_1$. By definition of $P$ above, this means that there exist two sets $S_1,S_2\in\mathcal{P}(X)$ with $x_1,x_2\in {I}(S_1)$, $x_1,x_2\in {I}(S_2)$, and $c(S_1)=x_1$, $c(S_2)=x_2$. By definition of ${I}$ above, it follows that there also exist two sets $T_1,T_2\in\mathcal{P}(X)$ such that $S_1\subset T_1$, $S_2\subset T_2$, and $c(T_1)=x_2$, $c(T_2)=x_1$. Observe, however, that the fact that $c(S_1)=x_1$, $c(T_1)=x_2$, $S_1\subset T_1$, $c(T_2)=x_2$, and $S_2\subset T_2$ imply that $c(T_2)\neq x_2$. Hence, we have arrived at our desired contradiction.
\end{proof}

\begin{lemma}\label{LEM:acyclic}
$P$ on $X$ as defined above is acyclic.
\end{lemma}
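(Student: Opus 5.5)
We prove acyclicity directly from NSC, mirroring the argument for Lemma \ref{LEM:asymmetric}. Suppose, towards a contradiction, that $P$ has a cycle $x_1 P x_2 P \cdots P x_{m} P x_1$ with $m\geq 2$ (the case $m=2$ is Lemma \ref{LEM:asymmetric}). Relabel the cycle so that it runs upward, $x_{i+1}Px_i$ for $i=1,\dots,n$ and $x_1 P x_{n+1}$. We then produce sets $S_1,\dots,S_{n+1},T_1,\dots,T_{n+1}$ that satisfy the hypotheses of NSC but violate its conclusion.

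\emph{Step 1 (unpacking $P$ and $I$).} By definition of $P$, each $x_{i+1}Px_i$ gives a set $T_i$ with $c(T_i)=x_{i+1}$ and $x_i\in I(T_i)$. Likewise $x_1Px_{n+1}$ gives $T_{n+1}$ with $c(T_{n+1})=x_1$ and $x_{n+1}\in I(T_{n+1})$. By the constructed definition of $I$, $x_i\in I(T_i)$ means there is $S_i\subset T_i$ with $c(S_i)=x_i$, for every $i=1,\dots,n+1$.

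\emph{Step 2 (applying NSC).} We now have $c(S_i)=x_i$ and $S_i\subset T_i$ for all $i=1,\dots,n+1$, together with $c(T_i)=x_{i+1}$ for $i=1,\dots,n$. This is exactly the antecedent of NSC, so NSC yields $c(T_{n+1})\neq x_1$. This contradicts $c(T_{n+1})=x_1$ from Step 1, so $P$ is acyclic.

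The only delicate point is the bookkeeping: the cycle must be oriented so that the $P$-relations match the index shift in NSC ($c(T_i)=x_{i+1}$ with $x_i$ ``inside'' $T_i$), and the closing link $x_1Px_{n+1}$ must play the role of the final pair $(S_{n+1},T_{n+1})$. The strict inclusion in the definition of $I$ matches the strict inclusion $S_i\subset T_i$ required by NSC, so no further adjustment is needed. Once acyclicity is established, the transitive closure of $P$ is a strict partial order. Extending it to a linear order $\succ$ then rationalizes $c$ with $I$: we have $c(S)\in I(S)$ via $S\subseteq S$ under the convention used, and $c(S)\,P\,y$ for every other $y\in I(S)$.
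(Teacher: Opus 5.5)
Your proof is correct and follows the same route as the paper. Orient the cycle as $x_{i+1}Px_i$ ($i=1,\dots,n$) and $x_1Px_{n+1}$. For each link take the $P$-witness $T_i$ with $c(T_i)=x_{i+1}$ and $x_i\in I(T_i)$, and extract $S_i\subset T_i$ with $c(S_i)=x_i$ from the definition of $I$. This configuration is exactly a violation of NSC.

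Your assignment of roles (the $T_i$ come from $P$, the $S_i$ come from $I$) is also the correct bookkeeping. The paper's write-up instead labels the $P$-witnesses as sets $S_i$ with $c(S_i)=x_i$, and then claims supersets $T_i$ from the definition of $I$, which only ever produces subsets. The intended argument is nonetheless the same as yours.
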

\begin{proof}
Towards a proof by contradiction, suppose that $P$ is not acyclic, i.e., there exist $n\in\mathbb{N}_{\geq1}$ and $x_1,\dots, x_{n+1}\in X$ such that $x_{n+1}Px_n$, $\dots$, $x_2Px_1$, $x_1Px_{n+1}$. By definition of $P$ above, this means that there exist sets $S_1,\dots, S_{n+1}\in\mathcal{P}(X)$ with $x_1,x_2\in {I}(S_1)$, $\dots$, $x_n,x_{n+1}\in {I}(S_n)$, $x_{n+1},x_1\in {I}(S_{n+1})$, and $c(S_i)=x_i$, for all $i=1,\dots, n+1$. By definition of ${I}$ above, it follows that there also exist sets $T_1,\dots, T_{n+1}\in\mathcal{P}(X)$ such that $S_i\subset T_i$, for all $i=1,\dots, n+1$, and $c(T_1)=x_2$, $\dots$, $c(T_n)=x_{n+1}$, $c(T_{n+1})=x_{1}$. This is a clear violation of NSC. Hence, we have arrived at our desired contradiction.
\end{proof}

We now verify that $({I},P)$ represent $c$ on $X$ as an AIC. To that end, pick any $T\in\mathcal{P}(X)$. Let $c(T)=x$. By our definition of ${I}$ above, it follows that $x\in{I}(T)$. Next, we show that there is no alternative $y\in{I}(T)$ with $yPx$. Suppose to the contrary that there is such a $y\in{I}(T)$. By our definition of ${I}$ above, it follows that there is an $S\in\mathcal{P}(X)$ such that $S\subset T$ and $c(S)=y$. On the other hand, $yPx$ implies that there is $S', T'\in\mathcal{P}(X)$ such that $S'\subset T'$ and $c(S')=x$, $c(T')=y$. But then, $c(S')=x$, $c(T')=y$, $S'\subset T'$, and $c(S)=y$. So, NSC implies that $c(T)\neq x$, for all $T\in\mathcal{P}(X)$ with $S\subset T$, and we have arrived at our desired contradiction. It follows that $\neg[zPx]$, for all $z\in{I}(T)$. Furthermore, choice from $T$ is decisive, since, by our definition of $P$ above, it holds that $xPz$, for all $z\in{I}(T)$, $z\neq x$. This establishes our desired conclusion.
\end{proof}

\subsection{Proof of Proposition \ref{PROP:revealed preference}}\label{PROOF:revealed preference}
\begin{proof}
Necessity: Take any pair of $x$ and $y$ without $x\succ^*y$. Then there exists a strict preference $\succ'$ including $\succ^*$ such that $y\succ'x$. By the proof of Theorem \ref{THM:main result}, there exists a interpretation operator ${I}$ such that $({I},\succ')$ represents $c$. Since $y\succ'x$, by definition, $x$ cannot be revealed to be preferred to $y$.

Sufficiency: We have already shown in Section \ref{SECT:Behavioral Foundation} that if $x\succ y$, then $x$ is revealed to be preferred to $y$. Now, consider the case when $x\succ^*y$. Since $\succ^*$ is defined as the transitive closure of $\succ$, this implies that there exists a sequence $(z_m)m=1^M$ in $X$ such that $x\succ z_1$, $z_1\succ z_2$, $\dots$, $z_M\succ y$. In this case, we know that for any $\succ^*$ that is part of the representation, $\succ\subseteq \succ^*$ and, hence $x\succ^* z_1$, $z_1\succ^* z_2$, $\dots$, $z_M\succ^* y$. Further, since $\succ^*$ is transitive it follows that $x\succ^*y$ and, thus, $x$ is revealed to be preferred to $y$

\end{proof}

\subsection{Proof of Proposition \ref{PROP:revealed consideration}}\label{PROOF:revealed consideration}
\begin{proof}
Necessity: Take any pair of $S\in\mathcal{P}(X)$ and any $x\in{I}(S)$ with $x\notin{I}^*(S)$. By the proof of Theorem \ref{THM:main result}, there exists a interpretation operator ${I}$ with $x\notin{I}(S)$ and a ranking $P$ such that $(P,{I})$ represents $c$. Since $x\notin{I}(S)$, by definition, $x$ cannot be revealed to receive consideration at $S$.

Sufficiency: Let $x\in{I}^*(T)$. Then, $c(S)=x$, for some $S\in\mathcal{P}(X)$ with $S\subset T$. Clearly, $c(S)=x$ implies that $x\in{I}(S)$. Monotonicity implies that also $x\in{I}(T)$. Then, it follows from the definition of the interpretation operator that $x\in{I}(T)$ for any ${I}$ that is part of a AIC representation of these choices.
\end{proof}

\subsection{Proof of Theorem \ref{THM:main result2}}\label{PROOF:main result2}
\begin{proof}
\underline{Necessity:} First, note that Double Monotonicity implies that, for all $S\in\mathcal{P}(X)$, it holds that $|T|=|{I}(T)|$. To see this, suppose otherwise. That is, there exists $T\in\mathcal{P}(X)$ such that $|T|\neq|{I}(T)|$. Note that then monotonicity implies that there exist $R,S\in\mathcal{P}(X)$ satisfying either $R\subsetneq S\subseteq T$, or, $T\subsetneq R\subseteq S$ such that $|{I}(R)|=|{I}(S)|$. Since $R\subset S$, monotonicity implies that ${I}(R)={I}(S)$. But, then $R\subsetneq S$ and ${I}(R)\subseteq {I}(S)$ together violate Double Monotonicity. This establishes that Double Monotonicity implies that, for all $S\in\mathcal{P}(X)$, it holds that $|S|=|{I}(S)|$.

NBC. Let $x,y,z\in X$ be such that $c(\{x,y\})\neq c(\{x,z\})\neq c(\{y,z\})$. For NBC to hold, we have to show that then $c(\{x,y\})= c(\{y,z\})$. To see this, first observe that, by the argument laid out above, Double Monotonicity implies that $|{I}(\{x,y\})|=|{I}(\{y,z\})|=|{I}(\{x,z\})|=2$. Next, observe that $\{x,y\}$ and $\{y,z\}$ are not subsets of one another, so Double Monotonicity implies that ${I}(\{x,y\})$ and ${I}(\{y,z\}\}$ are also not subsets of one another. But, then monotonicity implies that ${I}(\{x,y\}),{I}(\{y,z\}), and {I}(\{x,z\})$ are all three different 2-element subsets of ${I}(\{x,y,z\})$. Hence, $c(\{x,y\})\neq c(\{x,z\})\neq c(\{y,z\})$ implies that $c(\{x,y\})=c(\{y,z\})$.

CCI. Let $R,S,T\in\mathcal{P}(X)$ be such that $R\subset S\subset T$ and $c(R)=c(T)=x$. For CCI to hold, we have to show that then $c(S)=x$. First, note that $c(T)=x$ implies that $x\succ y$, for all $y\in{I}(T)\setminus\{x\}$. Next, since $R\subset T$, Double Monotonicity implies that ${I}(R)\subset {I}(T)$. As such, also $x\succ y$, for all $y\in{I}(R)\setminus\{x\}$. Since $c(R)=x$, we have $x\in{I}(R)$ and, by monotonicity also $x\in{I}(S)$. Hence, $c(S)=x$.

ND. Let $x,y\in X$ be such that $x\neq y$. Observe that since $\{x\}$ and $\{y\}$ are not subsets of one another, Double Monotonicity implies that also ${I}(\{x\})$ and ${I}(\{y\})$ are not subsets of one another. From above, we know that Double Monotonicity implies that, for all $S\in\mathcal{P}(X)$, it holds that $|S|=|{I}(S)|$. Hence, $|{I}(\{x\})|=|{I}(\{y\})|=1$ and, thus, $c(\{x\})\neq c(\{y\})$.

\underline{Sufficiency:} Suppose that $c$ satisfies NBC, C-WWARP and Noticeable Difference. We construct the interpretation operator ${I}:\mathcal{P}(X)\to\mathcal{P}(X)$ and the preference relation $\succ$ explicitly.

First, define, for any $T\in\mathcal{P}(X)$, that
$$
{I}(T)=\{c(\{x\})|\; x\in B\}
$$
Note how this ensures that, for any $x\in X$, the set ${I}(\{x\})$ is a singleton. Furthermore, Noticeable Difference implies that, whenever $x\neq y$ also ${I}(\{x\})\neq {I}(\{y\})$.

Next, observe that, for any $x\in X$, Noticeable Difference implies that there exists a unique $u_x\in X$ such that $c(\{u_x\})=x$. Therefore, define a binary relation $P\subseteq X\times X$ via
$$
xPy\quad\Leftrightarrow\quad c(\{u_x,u_y\})=x
$$

\begin{lemma}\label{LEM:complete and transitive}
$P$ on $X$ as defined above is complete and transitive.
\end{lemma}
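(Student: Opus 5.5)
The plan is to reduce everything to the binary menus $\{u_x,u_y\}$. Here $u_x$ is the unique alternative with $c(\{u_x\})=x$, which exists by Noticeable Difference (ND is an injective map from the finite set $X$ to itself, hence a bijection). I read $P$ as a relation on distinct pairs, in line with the strict preference we want to build. The argument splits into completeness and transitivity, and transitivity will come almost for free from NBC once completeness is known. By completeness I mean the stronger statement that for $x\neq y$ the binary choice $c(\{u_x,u_y\})$ is one of the two singleton choices $x,y$; this gives exactly one of $xPy$, $yPx$. This groundedness-within-the-relabelling statement is the real content, and I expect it to be the main obstacle.

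For completeness I would argue by induction from the top down, using C-Contraction Independence. Let $w=c(X)$. Then $u_w\in X$ and $c(\{u_w\})=w$. For any $S$ with $u_w\in S\subseteq X$ we have $\{u_w\}\subseteq S\subseteq X$ with $c(\{u_w\})=c(X)=w$, so CCI gives $c(S)=w$; the degenerate cases $S=\{u_w\}$ and $S=X$ are trivial. In particular $c(\{u_w,u_y\})=w$ for every $y\neq w$, so $wPy$ for all such $y$.

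I would then pass to $X'=X\setminus\{u_w\}$ and repeat the argument with $X'$ in place of $X$, provided $w':=c(X')$ satisfies $u_{w'}\in X'$, that is, $w'\neq w$. Ruling out $w'=w$ is the delicate step. My first attempt is to use NBC on triples $u_w,u_a,u_b\in X$ with $u_a,u_b\in X'$, together with CCI applied to chains $\{u_a\}\subset\{u_a,u_b\}\subset X'$, to show that an alternative $w$ whose unique preimage has been removed cannot reappear as a choice. Failing that, I would strengthen the induction hypothesis. The stronger claim is that for every menu $S$ the choice $c(S)$ lies in $\{c(\{s\})\mid s\in S\}$. I would prove it by induction on $|S|$, combining CCI (to transport a choice from $S$ down to its subsets) with NBC (to control binary menus). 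Once this claim holds, the peeling argument goes through and yields a linear order, which is in particular complete.

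Transitivity then follows from NBC. Suppose $xPy$ and $yPz$ with $x,y,z$ distinct, but not $xPz$; by completeness this means $zPx$. Then $c(\{u_x,u_y\})=x$, $c(\{u_y,u_z\})=y$ and $c(\{u_x,u_z\})=z$ are pairwise distinct. Apply NBC to the triple $u_x,u_y,u_z$, reading off the pairs: $c(\{u_x,u_y\})\neq c(\{u_y,u_z\})$ and $c(\{u_x,u_z\})\neq c(\{u_y,u_z\})$. NBC then forces $c(\{u_x,u_y\})=c(\{u_x,u_z\})$, that is, $x=z$, a contradiction. Hence $xPz$, and $P$ is transitive.
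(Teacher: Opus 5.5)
There is a genuine gap, and it sits exactly at the step you flagged. The claim that $c(\{u_x,u_y\})\in\{x,y\}$ for $x\neq y$ does not follow from NBC, CCI and ND. The same goes for the equivalent forms you proposed: your strengthened hypothesis $c(S)\in\{c(\{s\})\mid s\in S\}$, and ruling out $w'=w$ in the peeling. So neither of your two routes can be completed.

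Here is a counterexample. Take $X=\{a,b,d\}$, let $c(\{x\})=x$ for every $x\in X$, and let $c(S)=d$ for every $S$ with $|S|\geq 2$.
\begin{itemize}
\item ND holds because the singleton choices are distinct.
\item NBC holds because all three binary menus choose $d$. For distinct $x,y,z$ its antecedent is never met, and degenerate triples satisfy it trivially.
\item CCI holds because in any chain $R\subsetneq S\subsetneq T$ we have $|S|\geq 2$, hence $c(S)=d=c(T)$. Chains with an equality are trivial.
\end{itemize}
Here $u_x=x$ and $c(\{u_a,u_b\})=d\notin\{a,b\}$, so neither $aPb$ nor $bPa$, and $P$ is not complete.

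Your peeling breaks in its second round: $c(X)=d$ and $u_d=d$, so $X'=\{a,b\}$, yet $c(X')=d=w$. The strengthened hypothesis already fails for $|S|=2$. NBC only forbids the three binary choices on a triple from being pairwise distinct, and CCI only has bite when $c(R)=c(T)$. Nothing forces a binary menu to pick one of its two singleton images. Two parts of your proposal are fine: the first CCI step on $\{u_w\}\subseteq S\subseteq X$, and your NBC derivation of transitivity given completeness.

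The paper's own proof has the same hole. It asserts ``either $c(\{u_x,u_y\})=x$ or $c(\{u_x,u_y\})=y$'' as a consequence of the preceding argument, but ND only yields existence and uniqueness of $u_x,u_y$. The function above is not an RAIC: double monotonicity forces $I(S)=\{c(\{s\})\mid s\in S\}$, and $I(\{a,b\})=\{a,b\}$ cannot produce $d$. So the lemma is false as stated, and with it the sufficiency direction of Theorem~\ref{THM:main result2}.

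A repair is to add the axiom $c(S)\in\{c(\{s\})\mid s\in S\}$ for all $S\in\mathcal{P}(X)$, which every RAIC satisfies. Under that axiom:
\begin{itemize}
\item completeness is immediate from $S=\{u_x,u_y\}$;
\item your NBC argument gives transitivity;
\item the paper's concluding CCI step also works, since $c(T)=x\in I(T)$ then guarantees $u_x\in T$.
\end{itemize}
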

\begin{proof}
Take any $x,y\in X$. By our argument above there exist unique $u_x,u_y\in X$ such that either $c(\{u_x,u_y\})=x$ or $c(\{u_x,u_y\})=y$. As such, $P$ is complete. It is straightforward to establish that NBC implies that $P$ is also transitive.
\end{proof}

Now, take any $T\in\mathcal{P}(X)$ and let $c(T)=x$. Towards a proof by contradiction, suppose that there exists $z\in{I}(T)$ such that $zPx$. Then, by definition of $P$ above, it follows that $c(\{u_x,u_z\})=z$. But, then $c(\{u_x\})=x$, $c(\{u_x,u_z\})=z$, and $c(T)=x$ together with $\{u_x\}\subset\{u_x,u_z\}\subset T$ violates C-WWARP. Indeed, since $P$ is complete and transitive, have instead $xPz$, for all $z\in{I}(T)\setminus\{x\}$.
\end{proof}

\subsection{Proof of Proposition \ref{PROP:revealed preference2}}\label{PROOF:revealed preference2}
\begin{proof}
First, by the proof of Theorem \ref{THM:main result2}, Double Monotonicity implies that, for all $S\in\mathcal{P}(X)$, it holds that $|S|=|{I}(S)|$. Furthermore, from Noticeable Difference it follows that $x\neq y$ implies $c(\{x\})\neq c(\{y\})$. As such, defining, for any $T\in\mathcal{P}(X)$,
$$
{I}(T)=\{c(\{x\})|\; x\in T\}
$$
it holds that $|T|=|{I}(T)|$. Therefore, ${I}$ truly is the AI's unique interpretation operator recovered from choice data $c$.

Observe that the above reasoning implies that, for any $x,y\in X$, there exists $u_x,u_y\in X$ such that ${I}(\{u_x\})=\{x\}$, ${I}(\{u_y\})=\{y\}$, and ${I}(\{u_x,u_y\})=\{x,y\}$. As such, $c(\{u_x,u_y\})$ reveals whether the AI's preferences satisfy $x\succ y$ or $y\succ x$. Since ${I}(\{u_x\})=\{x\}$ and ${I}(\{u_y\})=\{y\}$, defining 
$$
x\succ y\text{ if }c(T)=x\text{ and }c(S)=y\text{ for some }S,T\in\mathcal{P}(X)\text{ with }S\subset T
$$
truly allows to fully identify the AI's preferences.
\end{proof}

\subsection{Proof of Theorem \ref{THM:main result3}}\label{PROOF:main result3}
\begin{proof}
\underline{Necessity:} As the proof of Theorem \ref{THM:main result2} lays out, Double Monotonicity implies that, for all $S\in\mathcal{P}(X)$, it holds that $|S|=|{I}(S)|$. Furthermore, for any two $S,T\in\mathcal{P}(X)$ with $S\neq T$, it holds that ${I}(S)\neq{I}(T)$. As such, Idempotence then implies that ${I}$ is the identity. That is, for all $S\in\mathcal{P}(X)$, it holds that ${I}(S)=S$.


WARP. Let $S,T\in\mathcal{P}(X)$ be such that $c(S)=x$, $y\in S$, and $x\in T$. For WARP to hold, we have to show that then $c(T)\neq y$. From above, it follows that ${I}(S)=S$ and ${I}(T)=T$. As such, $c(S)=x$ and $y\in S$ implies that $x\succ y$. Furthermore, $x\in T$ implies that $x\in{I}(T)$. Hence, $x\succ y$ implies that, by definition of an AIC, $c(T)\neq y$.

\underline{Sufficiency:} Suppose that $c$ satisfies WARP. 
We construct the interpretation operator ${I}:\mathcal{P}(X)\to\mathcal{P}(X)$ and the preference relation $\succ$ explicitly.

First, define, for any $T\in\mathcal{P}(X)$, that ${I}(T)=T$. Clearly, by this definition, ${I}$ satisfies Double Monotonicity and Idempotence.

Next, define for all $x,y\in X$ with $x\neq y$ that
$$
x\succ y\quad\Leftrightarrow\quad c(\{x,y\})=x
$$
Clearly, $\succ$ as thus defined is a complete and transitive binary relation.

Now, take any $T\in\mathcal{P}(X)$ and let $c(T)=x$. Towards a proof by contradiction, suppose that there exists $z\in{I}(T)=T$ such that $z\succ x$. Then, by definition of $\succ$ above, it follows that $c(\{x,z\})=z$. But, then $c(\{x,z\})=z$, $c(T)=x$, together with $z\in{I}(T)=T$ violates WARP. Therefore, we have arrived at our desired contradiction. As such, we can conclude that since $\succ$ is complete and transitive, we have instead $x\succ z$, for all $z\in{I}(T)\setminus\{x\}$. Hence, the tuple $(\succ,{I})$ rationalizes $c$.
\end{proof}

\subsection{Proof of Proposition \ref{PROP:revealed preference3}}\label{PROOF:revealed preference3}
\begin{proof}
First, by the proof of Theorem \ref{THM:main result3}, Double Monotonicity and Idempotence implies that, for all $S\in\mathcal{P}(X)$, it holds that ${I}(S)=S$. As such, defining, for any $T\in\mathcal{P}(X)$,
$$
{I}(T)=\{c(\{x\})|\; x\in T\}
$$
allows to truly recover the AI's unique interpretation operator from choice data $c$.

Furthermore, defining 
$$
x\succ y\text{ if }c(T)=x\text{ and }c(S)=y\text{ for some }S,T\in\mathcal{P}(X)\text{ with }S\subset T
$$
truly allows to fully identify the AI's complete and transitive preferences.
\end{proof}

\subsection{Proof of Proposition \ref{PROP:properties of intepretation}}\label{PROOF:properties of tranformer}

\begin{proof}
$(i)\Rightarrow(ii):$ Take any $S,T\in\mathcal{P}(X)$
\begin{align*}
S,T\in\mathcal{P}(X) &\Rightarrow {I}((S\cup T)\cap S)\subseteq {I}(S\cup T)\cap {I}(S)& \bigl(\mbox{by intersection closure}\bigr) \\
&\Rightarrow {I}(S)\subseteq {I}(S\cup T)\cap {I}(S)& \bigl(\mbox{by intersection of sets}\bigr) \\
&\Rightarrow {I}(S)\subseteq {I}(S\cup T) & \bigl(\mbox{by subset relation}\bigr)
\end{align*}
Analogously,
\begin{align*}
S,T\in\mathcal{P}(X) &\Rightarrow {I}((S\cup T)\cap T)\subseteq {I}(S\cup T)\cap {I}(T)& \bigl(\mbox{by intersection closure}\bigr) \\
&\Rightarrow {I}(T)\subseteq {I}(S\cup T)\cap {I}(T)& \bigl(\mbox{by intersection of sets}\bigr) \\
&\Rightarrow {I}(T)\subseteq {I}(S\cup T) & \bigl(\mbox{by subset relation}\bigr)
\end{align*}
Therefore, ${I}(S)\cup {I}(T)\subseteq {I}(S\cup T)$.

$(ii)\Rightarrow(iii):$ Take any $S,T\in\mathcal{P}(X)$
\begin{align*}
S\subseteq T &\Rightarrow {I}(S)\cup {I}(T\setminus S)\subseteq {I}(S\cup (T\setminus S))& \bigl(\mbox{by union closure}\bigr) \\
&\Rightarrow {I}(E)\cup {I}(T\setminus S)\subseteq {I}(T) & \bigl(\mbox{by union of sets}\bigr)\\
&\Rightarrow {I}(S)\subseteq {I}(T) & \bigl(\mbox{by subset relation}\bigr)
\end{align*}

$(iii)\Rightarrow(i):$ Take any $S,T\in\mathcal{P}(X)$
\begin{align*}
S,T\subseteq{X} &\Rightarrow S\cap T\subseteq S\text{ and }S\cap T\subseteq T& \bigl(\mbox{by intersection of sets}\bigr) \\
&\Rightarrow {I}(S\cap T)\subseteq {I}(S)\text{ and }{I}(S\cap T)\subseteq {I}(T) & \bigl(\mbox{by Monotonicity}\bigr)
\end{align*}
\end{proof}

\subsection{Proof of Theorem \ref{THM:consistent interpretation}}\label{PROOF:consistent interpretation}
\begin{proof}
Before starting with the proof, notice that $(ii)$ is equivalent to the weaker condition that states that $I(S)\subseteq I(T)$ implies $S\subseteq T$, as the converse already holds by monotonicity (Proposition \ref{PROP:properties of intepretation}). 

$(i)\Rightarrow(ii):$ Suppose that $I$ is consistent. We proceed to prove $(ii)$ by contraposition, i.e., we will show that $S\nsubseteq T$ implies $I(S)\nsubseteq I(T)$. If $S\cap T=\emptyset$, then by Proposition \ref{PROP:properties of intepretation}, we have $I(S)\cap I(T)=\emptyset$, and a fortiori $I(S)\nsubseteq I(T)$. If $S\cap T\neq\emptyset$, by Proposition \ref{PROP:properties of intepretation}, we have $I(S\setminus T)\cap I(T)=\emptyset$, meaning $I(S\setminus T)\nsubseteq I(T)$. The latter combined with $I(S\setminus T)\subseteq I(S)$, implies $I(S)\nsubseteq I(T)$, which completes this part of the proof.

$(ii)\Rightarrow(iii):$ By combining the condition in $(ii)$ with Proposition \ref{PROP:properties of intepretation}, we obtain 
\begin{equation}\label{EQ:Interpretation equivalence}
I(S)=I(T)\Leftrightarrow S=T
\end{equation}
Take a sequence of the states $({x}_1,\dots,{x}_K)$, so that each state appears exactly once. By definition
$$I(\{{x}_1\})\subseteq\ldots\subseteq I(\{{x}_1,\dots,{x}_k\})\subseteq\ldots\subseteq I(\{{x}_1,\dots,{x}_K\})$$
Suppose that at least one of these inclusions is not strict. Then, by Equivalence (\ref{EQ:Interpretation equivalence}), we have $\{{x}_1,\dots,{x}_k\}=\{{x}_1,\dots,{x}_{k+1}\}$, which is an obvious contradiction. Hence, we get
$$\underbrace{|I(\{{x}_1\})|}_{1}<\ldots< \underbrace{|I(\{{x}_1,\dots,{x}_k\})|}_{k}<\ldots<\underbrace{|I(\{{x}_1,\dots,{x}_K\})|}_{K}$$
meaning that $|I(S)|=|S|$ for every event $S\subseteq {X}$. So, we have $|I(\{{x}\})|=1$ for every ${x}\in{X}$. Furthermore, any two distinct states have a different image under $I$, by Equivalence (\ref{EQ:Interpretation equivalence}). Therefore $I$ is an automorphism when applied to the singletons. Finally, observe that it can only be that $I$ satisfies the condition in $(iii)$.

$(iii) \Rightarrow (iv):$ By the condition in (iii) it follows directly that
$$ (I(S))^c=\left(\bigcup_{{x}\in S}I(\{{x}\})\right)^c=\bigcup_{{x}\in  S^c}I(\{{x}\})=I(S^c)$$

$(iv) \Rightarrow (v):$ Take any $S,T\subseteq{X}$ such that $S\neq T$. Then, it either holds that $S\setminus T\neq\emptyset$, or, that $T\setminus S\neq\emptyset$. Observe that WLOG we can assume that it is $S\setminus T\neq\emptyset$. Then, there exists ${x}\in S$ such that ${x}\notin T$. Indeed, for such ${x}$ it holds that ${x}\in T^c$. By negation-elimination, $I(T^c)=\neg I(T)$. As such, by Monotonicity, our ${x}\in S\setminus T$ satisfies $\emptyset\neq I(\{{x}\})=I(\{{x}\}\cap T^c)\subseteq I(\{{x}\})\cap I(\neg T)\subseteq I(T^c)=\neg I(T)$. It follows, in particular, that $I(\{{x}\})\nsubseteq I(T)$. But, since ${x}\in S$, it holds that $(S\cap\{{x}\})\neq\emptyset$ such that, by Monotonicity, it holds that $I(\emptyset)\neq I({x})=I(S\cap\{{x}\})\subseteq I(S)\cap I(\{{x}\})\subseteq I(S)$. In summary, $\emptyset\neq I(\{{x}\})$, $I(\{{x}\})\subseteq I(S)$, but $I(\{{x}\})\nsubseteq I(T)$. Hence, $I(S)\neq I(T)$.

$(v) \Rightarrow (vi):$ Since $I$ is injective, each element of the codomain $\mathcal{P}(X)$ is mapped to by at most one element of the domain $\mathcal{P}(X)$. Since domain and codomain coincide, this implies that each element of the codomain is mapped to by exactly one element of the domain. As such, $I$ is surjective. In other words, for every $T\subseteq{X}$, there exists $S\subseteq {X}$ such that $I(S)=T$.

$(vi) \Rightarrow (vii):$ First, we show that there exists an automorphism $\rho:{X}\to{X}$ such that $I(\{{x}\})=\{\rho({x})\}$, for all ${x}\in{X}$. To this end, take any ${x}\in{X}$. By $I$ being surjective, there exists $S\subseteq{X}$ such that $I(S)=\{{x}\}$. Observe that, for all ${y}\in S$, Monotonicity then imply that 
$$
\emptyset\neq I(\{{y}\})=I( S\cap \{{y}\})\subseteq I(S)\cap I(\{{y}\})\subseteq I(S)=\{{x}\}
$$
Since $\emptyset\neq I(\{{y}\})$, it follows that $I(\{{y}\})=\{{x}\}$. By $I$ being surjective and $I$'s codomain coinciding with its domain, it furthermore follows that $S=\{{y}\}$. The same argument then implies that there exists an automorphism $\rho:{X}\to{X}$ such that $I(\{{x}\})=\{\rho({x})\}$, for all ${x}\in{X}$.

Second, we show that Double Union Closure holds, i.e., that for every $S,T\subseteq{X}$: $I(S\cup T)=I(S)\cup I(T)$. By Proposition \ref{PROP:properties of intepretation}, it holds that $I(S)\cup I(T)\subseteq I(S\cup T)$. Therefore, it only remains to show that $I(S\cup T)\subseteq I(S)\cup I(T)$.

To show this, suppose to the contrary that there exists $S,T\subseteq{X}$ such that $I(S\cup T)\nsubseteq I(S)\cup I(T)$, i.e., $I(S\cup T)\setminus (I(S)\cup I(T))\neq\emptyset$. Now, take any ${x}\in \left(I(S\cup T)\setminus (I(S)\cup I(T))\right)$. Since $I$ is surjective, there exists $R\subseteq{X}$ such that $I(R)=\{{x}\}$. In fact, from our automorphism above, we know that $R=\{{y}\}$, for some ${y}\in{X}$. That is, $I(\{{y}\})=\{{x}\}$. By Monotonicity, ${y}\notin S\cup T$. To see this, suppose otherwise, i.e., ${y}\in S\cup T$. Observe that WLOG we can assume that ${y}\in S$. Then, by Monotonicity, 
$$
\{{x}\}=I(\{{y}\})=I(S\cap\{{y}\})\subseteq I(S)\cap I(\{{y}\})\subseteq I(S)
$$
And, thus, ${x}\in I(S)$, violating that ${x}\in \left(I(s\cup T)\setminus (I(S)\cup I(T))\right)$. This shows that ${y}\notin S\cup T$.

Therefore, $(\{{y}\}\cup S\cup T)\neq S\cup T$. Since $I$ is surjective, this implies that also $I(\{{y}\}\cup(S\cup T))\neq I(S\cup T)$. By Monotonicity, it furthermore holds that $I(\{{y}\}\cup(S\cup T))\supsetneq I(S\cup T)$. Iterating this further, implies that there exists ${x}\in{X}$ such that $(\{{x}\}\cup\{{y},\dots\}\cup S\cup T)={X}\neq (\{{y},\dots\}\cup S\cup T)={X}\setminus\{{x}\}$, but $I({X}\setminus\{{x}\})={X}$. Since, by Monotonicity, $I({X})\supseteq I({X}\setminus\{{x}\})$, this implies that $I({X})=I({X}\setminus\{{x}\})={X}$ which violates the fact that $I$ is surjective. Therefore, $I(S\cup T)\subseteq I(S)\cup I(T)$. Hence, $I(S\cup T)= I(S)\cup I(T)$

$(vii) \Rightarrow (i):$ Take any $E\subseteq{X}$ and consider $I(S)\cap I(S^c)$. By Double Union Closure, $I(S)=\bigcup_{{x}\in S}I(\{{x}\})$ and $I(S^c)=\bigcup_{{x}\in S^c}I(\{{x}\})$. Since there exists an automorphism  $\rho:X\to X$ such that $I(\{x\})=\{\rho(x)\}$, for all $x\in X$, it follows that $I(\{x\})=\{\rho(x)\}\neq \{\rho(y)\}=I(\{y\})$, for any $x,y\in X$ with $x\neq y$. Hence, $I(S)\cap I( S^c)=\emptyset$.

\end{proof}

\subsection{Proof of Theorem \ref{THM:main result4}}\label{PROOF:main result4}
\begin{proof}
\underline{Necessity:} Groundedness. Note that Consistency implies that, for any $x\in X$ and all $S\subseteq X\setminus\{x\}$, it holds that ${I}(\{x\})\cap{I}(S)=\emptyset$. Since ${I}$ maps each non-empty subset of $X$ to a non-empty subset of $X$, this implies that, for any $x\in X$, the set ${I}(\{x\})$ is a singleton set. Furthermore, $x\neq y$ implies that ${I}(\{x\})\neq {I}(\{y\})$. But, then idempotence implies that, for all $x\in X$, it holds that ${I}(\{x\})=\{x\}$. 

As such, it follows from Consistency that, for all $S\in\mathcal{P}(X)$, it holds that ${I}(S)\subseteq S$. Since $c(S)\in{I}(S)$, this implies that $c(S)\in S$.

\underline{Sufficiency:} Suppose that $c$ satisfies Groundedness. 
We construct the grounded interpretation ${I}:\mathcal{P}(X)\to\mathcal{P}(X)$ and the preference relation $\succ$ explicitly.

First, define, for any $T\in\mathcal{P}(X)$, that ${I}(T)=c(T)$. Next, let $\succ$ be any complete and transitive binary relation on $X$.

Now, take any $T\in\mathcal{P}(X)$ and let $c(T)=x$. Since $x\in {I}(T)$ and, in particular, ${I}(T)=\{x\}$, maximizing any complete and transitive binary relation complete and transitive binary relation produces $c(T)=x$. Hence, the tuple $(\succ,{I})$ rationalizes $c$.
\end{proof}

\subsection{Proof of Proposition \ref{PROP:revealed preference4}}\label{PROOF:revealed preference4}
\begin{proof}
By the proof of Theorem \ref{THM:main result4}, for any $T\in{I}(T)$, we can infer that $\{c(T)\}\subseteq{I}(T)\subseteq T$.

Furthermore, the proof of Theorem \ref{THM:main result4} establishes that there we cannot infer anything about the preferences $\succ$ which are part of a GAIC.
\end{proof}

\subsection{Proof of Theorem \ref{THM:main result5}}\label{PROOF:main result5}
\begin{proof}
\underline{Necessity:} By the poof of Theorem \ref{THM:main result4}, for all $x\in X$, it holds that ${I}(\{x\})=\{x\}$. Monotonicity then implies that, for any fixed $S\in\mathcal{P}(X)$ and all $x\in S$, it holds that $x\in{I}(S)$. By Consistency, we then have ${I}(S)=S$ for any such $S$. As such GMAIC maximizes a complete and transitive preference relation over non-empty subsets of $X$, so standard results imply that the corresponding $c$ satisfies WARP.

\underline{Sufficiency:}Suppose that $c$ satisfies WARP. 
We construct the grounded interpretation ${I}:\mathcal{P}(X)\to\mathcal{P}(X)$ and the preference relation $\succ$ explicitly.

First, define, for any $T\in\mathcal{P}(X)$, that ${I}(T)=T$. Next, define 
$$
x\succ y\text{ if }c(T)=x\text{ and }c(S)=y\text{ for some }S,T\in\mathcal{P}(X)\text{ with }S\subset T
$$

Since $c$ satisfies WARP, $\succ$ as thus defined is complete and transitive. Furthermore, if $c(T)=x$, then $c(\{y\})=y$, for any $y\in T$ implies that $x\succ y$ such that the choice from $T$ is decisive. Hence, the tuple $(\succ,{I})$ rationalizes $c$.
\end{proof}

\subsection{Proof of Proposition \ref{PROP:revealed preference5}}\label{PROOF:revealed preference5}
\begin{proof}
First, by the proof of Theorem \ref{THM:main result5}, it follows that, for all $S\in\mathcal{P}(X)$, it holds that ${I}(S)=S$. As such, defining, for any $T\in\mathcal{P}(X)$,
$$
{I}(T)=\{c(\{x\})|\; x\in T\}
$$
allows to truly recover the grounded \& monotonic interpretation $I$ from choice data $c$.

Furthermore, defining 
$$
x\succ y\text{ if }c(T)=x\text{ and }c(S)=y\text{ for some }S,T\in\mathcal{P}(X)\text{ with }S\subset T
$$
truly allows to fully identify the AI's complete and transitive preferences.
\end{proof}


\clearpage
\begin{thebibliography}{}

\bibitem[Armouti-Hansen and Kops, 2018]{Armouti-Hansen2018}
Armouti-Hansen, J. and Kops, C. (2018).
\newblock This or that? sequential rationalization of indecisive choice
  behavior.
\newblock {\em Theory and Decision}, 84(4):507--524.

\bibitem[Armouti-Hansen and Kops, 2024]{Kops2024}
Armouti-Hansen, J. and Kops, C. (2024).
\newblock Managing anticipation and reference-dependent choice.
\newblock {\em Journal of Mathematical Economics}, 112:102988.

\bibitem[Baigent and Gaertner, 1996]{Baigent1996}
Baigent, N. and Gaertner, W. (1996).
\newblock Never choose the uniquely largest a characterization.
\newblock {\em Economic Theory}, 8(2):239--249.

\bibitem[Borah and Kops, 2020]{Borah2020}
Borah, A. and Kops, C. (2020).
\newblock Choice via social influence.
\newblock Working paper, University of Heidelberg.

\bibitem[Caplin et~al., 2025]{Caplin2025}
Caplin, A., Martin, D., and Marx, P. (2025).
\newblock Modeling machine learning: A cognitive economic approach.
\newblock {\em Journal of Economic Theory}, 224:105970.

\bibitem[Cherepanov et~al., 2013]{Cherepanov2013}
Cherepanov, V., Feddersen, T., and Sandroni, A. (2013).
\newblock Rationalization.
\newblock {\em Theoretical Economics}, 8(3):775--800.

\bibitem[Cuhadaroglu, 2017]{Cuhadaroglu2017}
Cuhadaroglu, T. (2017).
\newblock Choosing on influence.
\newblock {\em Theoretical Economics}, 12(2):477--492.

\bibitem[Dutta and Horan, 2015]{Horan2015}
Dutta, R. and Horan, S. (2015).
\newblock Inferring rationales from choice: Identification for rational
  shortlist methods.
\newblock {\em American Economic Journal: Microeconomics}, 7(4):179--201.

\bibitem[Eliaz et~al., 2011]{Eliaz2011}
Eliaz, K., Richter, M., and Rubinstein, A. (2011).
\newblock Choosing the two finalists.
\newblock {\em Economic Theory}, 46(2):211--219.

\bibitem[Eliaz and Spiegler, 2011]{Eliaz2011b}
Eliaz, K. and Spiegler, R. (2011).
\newblock Consideration sets and competitive marketing.
\newblock {\em The Review of Economic Studies}, 78(1):235--262.

\bibitem[Horan, 2016]{Horan2016}
Horan, S. (2016).
\newblock A simple model of two-stage choice.
\newblock {\em Journal of Economic Theory}, 162:372--406.

\bibitem[Katz and George, 2019]{Katz2019}
Katz, J.~D. and George, D.~T. (2019).
\newblock Reclaiming magical incantation in graduate medical education.
\newblock {\em Clinical Rheumatology}, 39(3):703--707.

\bibitem[Kim et~al., 2024]{Kim2024}
Kim, J., Kovach, M., Lee, K.-M., Shin, E., and Tzavellas, H. (2024).
\newblock Learning to be homo economicus: Can an llm learn preferences from
  choice.

\bibitem[Kops, 2018]{Kops2018}
Kops, C. (2018).
\newblock {(F)}lexicographic shortlist method.
\newblock {\em Economic Theory}, 65(1):79--97.

\bibitem[Kops, 2022]{Kops2022}
Kops, C. (2022).
\newblock Cluster-shortlisted choice.
\newblock {\em Journal of Mathematical Economics}, 102:102726.

\bibitem[Lleras et~al., 2017]{Lleras2017}
Lleras, J.~S., Masatlioglu, Y., Nakajima, D., and Ozbay, E.~Y. (2017).
\newblock When more is less: Limited consideration.
\newblock {\em Journal of Economic Theory}, 170:70--85.

\bibitem[Manzini and Mariotti, 2007]{Manzini2007}
Manzini, P. and Mariotti, M. (2007).
\newblock Sequentially {R}ationalizable {C}hoice.
\newblock {\em American Economic Review}, 97(5):1824 -- 1839.

\bibitem[Manzini and Mariotti, 2012a]{Manzini2012a}
Manzini, P. and Mariotti, M. (2012a).
\newblock Categorize then choose: Boundedly rational choice and welfare.
\newblock {\em Journal of the European Economic Association}, 10(5):1141--1165.

\bibitem[Manzini and Mariotti, 2012b]{Manzini2012b}
Manzini, P. and Mariotti, M. (2012b).
\newblock Choice by lexicographic semiorders.
\newblock {\em Theoretical Economics}, 7(1):1--23.

\bibitem[Manzini et~al., 2025]{Manzini2025}
Manzini, P., Mariotti, M., and \"Ulk\"u, L. (2025).
\newblock Choice and opportunity costs.
\newblock {\em The Review of Economic Studies}, page rdaf101.

\bibitem[Masatlioglu et~al., 2012]{Masatlioglu2012}
Masatlioglu, Y., Nakajima, D., and Ozbay, E.~Y. (2012).
\newblock Revealed attention.
\newblock {\em American Economic Review}, 102(5):2183--2205.

\bibitem[Russell and Norvig, 1995]{Russell1995}
Russell, S.~J. and Norvig, P. (1995).
\newblock {\em Artificial intelligence: A Modern Approach}.
\newblock Upper Saddle River, New Jersey: Prentice Hall.

\bibitem[Salant and Rubinstein, 2008]{Salant2008}
Salant, Y. and Rubinstein, A. (2008).
\newblock (a, f): Choice with frames.
\newblock {\em The Review of Economic Studies}, 75(4):1287--1296.

\bibitem[Shi et~al., 2024]{Shi2024}
Shi, W., Min, S., Yasunaga, M., Seo, M., James, R., Lewis, M., Zettlemoyer, L.,
  and Yih, W.-t. (2024).
\newblock Replug: Retrieval-augmented black-box language models.
\newblock In {\em Proceedings of the 2024 Conference of the North American
  Chapter of the Association for Computational Linguistics: Human Language
  Technologies (Volume 1: Long Papers)}, pages 8371--8384.

\bibitem[Stango and Zinman, 2014]{Stango2014}
Stango, V. and Zinman, J. (2014).
\newblock Limited and varying consumer attention: Evidence from shocks to the
  salience of bank overdraft fees.
\newblock {\em The Review of Financial Studies}, 27(4):990--1030.

\bibitem[Tyson, 2015]{Tyson2015}
Tyson, C.~J. (2015).
\newblock Satisficing behavior with a secondary criterion.
\newblock {\em Social Choice and Welfare}, 44(3):639--661.

\bibitem[Vaswani et~al., 2017]{Vaswani2017}
Vaswani, A., Shazeer, N., Parmar, N., Uszkoreit, J., Jones, L., Gomez, A.~N.,
  Kaiser, L.~u., and Polosukhin, I. (2017).
\newblock Attention is all you need.
\newblock In Guyon, I., Luxburg, U.~V., Bengio, S., Wallach, H., Fergus, R.,
  Vishwanathan, S., and Garnett, R., editors, {\em Advances in Neural
  Information Processing Systems}, volume~30.

\end{thebibliography}
\end{document}